%%%%%%%%%%%%%%% COMMENTS %%%%%%%%%%%%%%%
%
%
%  20190529
%  Corrections by JFA editor are reflected in this version. 
%  It is the final version, having the same contents as published one. 
%
%
%
%
%
%
%%%%%%%%%%%%%%%%%%%%%%%%%%%%%%%%%%%%%%%%%

\documentclass[12pt]{article}

\usepackage[a4paper]{geometry}
\usepackage{amssymb}
\usepackage{amsmath}
\usepackage{amsthm}

%\usepackage{color}

% AJ addition
%\usepackage{drafthead}
%\usepackage{showkeys}
% end AJ addition

\theoremstyle{plain}
\newtheorem{theorem}{Theorem}[section]
\newtheorem{proposition}[theorem]{Proposition}
\newtheorem{lemma}[theorem]{Lemma}

\theoremstyle{definition}

\theoremstyle{remark}

\numberwithin{equation}{section}

\title{Branching form of the resolvent at thresholds for multi-dimensional discrete Laplacians}
\author{Kenichi {\scshape Ito}\footnote{Graduate School of Mathematical Sciences, 
The University of Tokyo, 3-8-1 Komaba, Meguro-ku, Tokyo 153-8914, Japan.
E-mail: \texttt{ito@ms.u-tokyo.ac.jp}. 
}
\ \& 
Arne {\scshape Jensen}\footnote{Department of Mathematical Sciences,
Aalborg University, Skjernvej 4A, DK-9220 Aalborg \O{}, Denmark.
E-mail: \texttt{matarne@math.aau.dk}. 
}}
\date{}

\begin{document}
\allowdisplaybreaks
\maketitle

\begin{abstract}
We consider the discrete Laplacian on $\mathbb Z^d$,
and compute asymptotic expansions of its resolvent 
around thresholds embedded in continuous spectrum as well as those at end points.
We prove that
the resolvent has a square-root branching if $d$ is odd,
and a logarithm branching if $d$ is even,
and, moreover, obtain explicit expressions for these branching parts 
involving the Lauricella hypergeometric function. 
In order to analyze a non-degenerate threshold of general form 
we use an elementary step-by-step expansion procedure,
less dependent on special functions.
\end{abstract}

\medskip
\noindent
\textit{Keywords}: Threshold; Square lattice; Discrete Laplacian; Expansion of resolvent

\medskip

%\tableofcontents

\section{Introduction}
Consider the discrete Laplacian on $\mathbb Z^d$, given for any function $u\colon\mathbb Z^d\to\mathbb C$ by
\begin{equation}\label{standard}
(\triangle u)[n]=\sum_{j=1}^d(u[n+e_j]+u[n-e_j]-2u[n]),
\quad n\in\mathbb Z^d,
\end{equation}
where $\{e_j\}_{j=1,\dots,d}\subset\mathbb Z^d$ is the standard basis of $\mathbb Z^d$. The operator $H_0=-\triangle$ is bounded and self-adjoint on $\ell^2(\mathbb Z^d)$. In Fourier space $L^2(\mathbb T^d)$ it is given by multiplication by
\begin{equation*}
\Theta(\theta)=4\sin^2({\theta_1}/2)+\dots+4\sin^2({\theta_d}/2)
;\quad 
\theta=(\theta_1,\dots,\theta_d)\in\mathbb T^d.
\end{equation*}
The critical values of $\Theta(\theta)$ are $0$, $4$, \ldots, $4d$,
and they are called the \emph{thresholds} of $H_0$. The purpose of this paper is to investigate the asymptotic behavior of the resolvent
\begin{align*}
R_0(z)=(H_0-z)^{-1},
\end{align*}
or its convolution kernel
\begin{align}
k(z,n)=
(2\pi)^{-d}\int_{\mathbb T^d}\frac{\mathrm e^{\mathrm in\theta}}{\Theta(\theta)-z}\,\mathrm d\theta,
\label{170206}
\end{align}
as the spectral parameter $z\in\mathbb C\setminus \sigma(H_0)$ approaches one of the thresholds. 
We prove that there appears a square-root branching if $d$ is odd,
and a logarithm branching if $d$ is even.
Moreover we obtain 
explicit expressions for these branching parts in terms of the Lauricella hypergeometric function of type $B$. 
As far as the authors are aware, an explicit expression of the integral \eqref{170206} 
by special functions seems not to have been known,
and this paper provides a partial answer to this question.

By localizing in Fourier space in a neighborhood of a threshold and 
changing variables appropriately one finds that the problem
gets very similar to an analysis of
the resolvent of an ultra-hyperbolic operator of signature $(p,q)$ with $p+q=d$: 
\begin{equation*}
\square_{p,q}=\partial^2_1+\cdots+\partial^2_p-\partial^2_{p+1}-\cdots
-\partial^2_{p+q}. 
\end{equation*}
Thus a large part of this paper is concerned with a study of the resolvent of $\square_{p,q}$ 
at its threshold zero, for all possible values of $p$ and $q$ with $p+q=d$. 
The kernel of the resolvent of $\square_{p,q}$ can be written explicitly 
using a Macdonald function, see \cite{Brychkov-Prudnikov}.
However, with an application to the discrete Laplacian in mind,
we need to develop another more elementary and general expansion scheme 
less dependent on a particular special function.

In the following section we describe the setting in more detail and then state the results.
Here we comment on the literature.

Whereas there is an extensive literature on ultra-hyperbolic operators, see e.g.\ \cite{Hormander},
the question concerning the behavior of the resolvent 
around its hyperbolic thresholds raised here seems not to have been studied previously except 
in a few cases:
Murata \cite{Murata} computed an expansion for 
a differential operator of constant coefficients on $\mathbb R^d$,
and Komech, Kopylova, and Vainberg \cite{Komech-Kopylova-Vainberg}
computed an expansion for the discrete Laplacian on $\mathbb Z^2$.
However, detailed computations seem not to be given there,
and all the coefficients are not determined completely, either.
Our paper provides the details of computations quite clearly, 
and for the first time determines all the coefficients of branching part of the resolvent for 
the discrete Laplacian.

An expansion of the resolvent for the discrete Laplacian
could be a very important contribution 
to the direct and inverse scattering problems in the discrete spaces. 
In fact, in the discrete setting, typical techniques in the continuous spaces,
such as the limiting absorption principle or some reconstruction schemes, 
often fail due to embedded thresholds, cf.\ \cite{Poulin} and \cite{Isozaki-Korotyaev}. 
Hence the embedded thresholds of the discrete operators are recently of increasing interest.

One proposal to deal with the embedded thresholds of the discrete Laplacian~\eqref{standard} has been to replace it with a different version, see~\cite{Poulin} and references therein.

We also mention the recent result~\cite{Richard-Tiedra}, where an asymptotic expansion of the resolvent 
around embedded thresholds for an operator on a waveguide is obtained.
Note that the embedded thresholds in a waveguide have a different nature from those considered here. 
They originate from elliptic critical points of multiple scattering channels 
similar to $N$-body quantum systems rather than those from hyperbolic critical points.

\section{Setting and result}

\subsection{Discrete Laplacian}\label{1606020}

For any function $u\colon \mathbb Z^d \to\mathbb C$ we define 
$\triangle u\colon\mathbb Z^d\to\mathbb C$ by 
\begin{align*}
(\triangle u)[n]=\sum_{j=1}^d(u[n+e_j]+u[n-e_j]-2u[n]),\quad n\in\mathbb Z^d,
\end{align*}
where $\{e_j\}_{j=1,\dots,d}\subset\mathbb Z^d$ is the standard basis. 
The operator $H_0=-\triangle$ is bounded and self-adjoint on 
the Hilbert space $\mathcal H=\ell^2(\mathbb Z^d)$, and 
its spectrum is given by 
\begin{align*}
\sigma(H_0)=\sigma_{\mathrm{ac}}(H_0)=[0,4d].
\end{align*}
In fact, letting $\mathbb T=\mathbb R/(2\pi\mathbb Z)$
and $\widehat{\mathcal H}= L^2(\mathbb {T}^d)$,
we define the Fourier transform 
${\mathcal F}\colon \mathcal H \to\widehat{\mathcal H}$ 
and its inverse 
${\mathcal F}^*\colon \widehat{\mathcal H}\to\mathcal H$  by 
\begin{align*}
({\mathcal F} u)(\theta)=(2\pi)^{-d/2}\sum_{n\in\mathbb {Z}^d}\mathrm e^{-\mathrm in\theta}u[n],
\quad
({\mathcal F}^* f)[n]=(2\pi)^{-d/2}\int_{\mathbb T^d}\mathrm e^{\mathrm in\theta}f(\theta)\mathrm d\theta,
\end{align*}
and then $H_0$ is diagonalized as 
\begin{align*}
\mathcal FH_0\mathcal F^*=\Theta(\theta);\quad
\Theta(\theta)=4\sin^2({\theta_1}/2)+\dots+4\sin^2({\theta_d}/2),
\end{align*}
where the right-hand side $\Theta(\theta)$ denotes a multiplication
operator by the same function. 
The critical values $0,4,\dots,4d$ of $\Theta(\theta)$ 
are called \textit{thresholds}, 
and it is \textit{of elliptic} or \textit{hyperbolic type} 
depending on the associated critical point being of elliptic or hyperbolic type, respectively.
We note that 
the critical points associated  
with a critical value $4q$, $q\in\{0,\dots,d\}$, form a set
$$\Omega(p,q)=\bigl\{\theta\in\{0,\pi\}^d;\ 
\#\{j;\ \theta_j=0\}=p,\ \#\{j;\ \theta_j=\pi\}=q\bigr\},$$
where $p=d-q$, and $\#$ denotes a number of elements of the set that follows.
Hence the thresholds $0,4d$ at end points are of elliptic type, and 
the embedded thresholds $4,\dots, 4(d-1)$ are of hyperbolic type.

Let us fix one threshold $4q$, $q\in\{0,\dots,d\}$, and 
investigate the asymptotic behavior of the resolvent $R_0(z)=(H_0-z)^{-1}$ for $z\sim 4q$.
Note that if we set 
for $(z,n)\in(\mathbb C\setminus [0,4d])\times\mathbb Z^d$
\begin{align}
k(z,n)=
(2\pi)^{-d}\int_{\mathbb T^d}\frac{\mathrm e^{\mathrm in\theta}}{\Theta(\theta)-z}\,\mathrm d\theta
,
\label{11.4.17.3.5b}
\end{align}
then for any rapidly decaying function $u\colon\mathbb Z^d\to\mathbb C$
\begin{align*}
R_0(z)u=k(z,{}\cdot{})* u.
\end{align*}
This reduces the problem to an investigation of the convolution kernel $k(z,n)$.
The branching of $k(z,n)$ comes only from 
integrations \eqref{11.4.17.3.5b} 
on neighborhoods of the critical points $\Omega(p,q)$. 
Let us index points of $\Omega(p,q)$ as 
\begin{align*}
\Omega(p,q)=\bigl\{\theta^{(l)}\bigr\}_{l=1,\dots,L};\quad 
L=\#\Omega(p,q)=\left(\genfrac{}{}{0pt}{}{d}{p}\right)
=\left(\genfrac{}{}{0pt}{}{d}{q}\right),
\end{align*}
which are ordered, e.g., lexicographically with 
\begin{align}
\theta^{(1)}=
(0,\dots,0,\pi,\dots,\pi)\in\mathbb T^p\oplus\mathbb T^q=\mathbb T^d.
\label{17021814}
\end{align}
We fix a neighborhood $U_1\subset \mathbb T^d$ of $\theta^{(1)}$ as follows:
Define the local coordinates 
$\xi(\theta)=(\xi'(\theta),\xi''(\theta))\in\mathbb R^p\oplus\mathbb R^q$ 
around $\theta^{(1)}\in\mathbb T^d$ 
by 
\begin{align}
\begin{split}
\xi'(\theta)&
=
(2\sin(\theta_1/2),\dots,2\sin(\theta_p/2)),
\\
\xi''(\theta)&
=
(2\cos(\theta_{p+1}/2),\dots,2\cos(\theta_{p+q}/2)),
\end{split}
\label{11.9.4.7.4}
\end{align}
and then set 
\begin{align*}
U_1&=\bigl\{\theta\in\mathbb T^d;\ |\xi'(\theta)|+|\xi''(\theta)|<2\bigr\}.
\end{align*}
Similarly, we can fix neighborhoods $U_l\subset\mathbb T^d$ of $\theta^{(l)}$, 
$l=2,\dots, L$, and then decompose
\begin{align}
k(z,n)
&=k_0(z,n)+k_1(z,n)+\dots+k_L(z,n)
\label{16060111}
\end{align}
by
\begin{align*}
k_0(z,n)&=k(z,n)-k_1(z,n)-\dots-k_L(z,n),\\
k_l(z,n)&=(2\pi)^{-d}
\int_{U_{l}}\frac{\mathrm e^{\mathrm in\theta}}{\Theta(\theta)-z}\,\mathrm d\theta
\quad\text{for }l=1,\dots,L.
\end{align*}
Since the integration region of $k_0(z,n)$ is uniformly away from $\Omega(p,q)$,
the function $k_0(w+4q,n)$ is analytic in $w\in\Delta(4):=\{w\in \mathbb C;\ |w|<4\}$.
Thus it suffices to investigate $k_l(z,n)$, $l=1,\dots,L$.

To state our main theorem let us introduce functions $E^{(l)}(w,n)$, $l=1,\dots,L$, 
associated with $k_l(w+4q,n)$.
For $l=1$, 
letting $F_B^{(d)}$ be the \textit{Lauricella hypergeometric function of type $B$}, 
we define for $(w,n)\in\Delta(4)\times \mathbb Z^d$
\begin{align*}
E^{(1)}(w,n)
&=\frac{(-1)^{|n''|}\mathrm i^q}{2^d\pi^{d/2}\Gamma(d/2)}F_B^{(d)}\bigl(
1/2-n,
1/2+n;{}
{d}/{2};w^{(1)}/4\bigr).
\end{align*}
Here we split $n=(n',n'')\in\mathbb Z^p\oplus\mathbb Z^q$, 
and denoted 
$1/2\pm n=(1/2\pm n_1,\dots,1/2\pm n_d)$ for simplicity and
\begin{align}
w^{(1)}=(w,\dots,w,-w,\dots,-w)\in \mathbb C^p\oplus\mathbb C^q.
\label{17021815}
\end{align}
Inserting the definition of $F_B^{(d)}$, see, e.g., \cite[\S 8.6]{Slater},
and noting the formulas holding for $m\in\mathbb Z_+:=\{0,1,2,\ldots\}$:
\begin{align}
\Gamma(1/2+m)=\frac{(2m-1)!!}{2^m}\sqrt{\pi},\quad
\Gamma(1/2-m)=\frac{(-2)^m}{(2m-1)!!}\sqrt{\pi},
\label{170218}
\end{align}
we can also write
\begin{align*}
E^{(1)}(w,n)
=\frac{(-1)^{|n'|}\mathrm i^q}{2^d\pi^{d/2+1}}
\sum_{\alpha\in\mathbb Z_+^d}
&\frac{\prod_{j=1}^d\bigl[\Gamma(1/2-n_j+\alpha_j)\Gamma(1/2+n_j+\alpha_j)\bigr]}{\Gamma(|\alpha|+d/2)}
\\&
\cdot\frac{(-1)^{|\alpha''|}w^{|\alpha|}}{4^{|\alpha|}\alpha!},
\end{align*}
where $\alpha=(\alpha',\alpha'')\in\mathbb Z_+^p\oplus\mathbb Z_+^q$. 
We note that the choice of signs in \eqref{17021815}
corresponds to the form of critical point \eqref{17021814}.
By changing signs of \eqref{17021815} appropriately 
we define $E^{(l)}(w,n)$, $l=2,\dots,d$, similarly to the above.
We note that $E^{(l)}(w,n)$ satisfies the eigenequation in $n$:
$$(-\triangle-w-4q)E^{(l)}(w,n)=0.$$
The last identity will be verified in Propositions~\ref{11.9.10.10.7} and \ref{16052215b}.

In this paper, unless otherwise noted, we usually 
choose the branch of $\sqrt w$ for $w\in \mathbb C\setminus [0,\infty)$ 
with $\mathop{\mathrm{Im}}\sqrt w>0$, so that 
for any $w\in\mathbb C_+=\{w\in\mathbb C;\ \mathop{\mathrm{Im}}w>0\}$ we have
\begin{align*}
\sqrt{-w}&=\mathrm i\sqrt w .
\end{align*}
On the other hand, a branch of $\log w$ for $w\in\mathbb C\setminus (-\infty,0]$
is always chosen such that $-\pi<\mathop{\mathrm{Im}}\log w<\pi$, so that 
for any $w\in\mathbb C\setminus (-\infty,0]$
\begin{align*}
\log (1/w)&=-\log w.
\end{align*}

Now we state the main theorem of this paper.

\begin{theorem}\label{1606011129}
Let $q\in\{0,1,\dots,d\}$.
The function $k(z,n)$ defined
by \eqref{11.4.17.3.5b} for $(z,n)\in(\mathbb C\setminus [0,4d])\times\mathbb Z^d$
has the following expressions:
\begin{enumerate}
\item
If $d$ is odd, then 
there exists a function $\chi(w,n)$ analytic in $w\in\Delta(4)$ such that 
\begin{align}
k(w+4q,n)=\mathrm i\pi(\sqrt w)^{d-2} \sum_{l=1}^LE^{(l)}(w,n)
+\chi(w,n);
\label{17021716}
\end{align}
\item
If $d$ is even, then
there exists a function $\chi(w,n)$ analytic in $w\in\Delta(4)$ such that 
\begin{align}
k(w+4q,n)
&=
-(\sqrt w)^{d-2}(\log w)
\sum_{l=1}^LE^{(l)}(w,n)
+\chi(w,n).
\label{17021717}
\end{align}

\end{enumerate}

\end{theorem}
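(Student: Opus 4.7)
The plan is to exploit the decomposition \eqref{16060111} and to analyze each $k_l(z,n)$, $l=1,\ldots,L$, by reducing it via a local change of variables to the resolvent of the ultra-hyperbolic operator $\square_{p,q}$ at its threshold zero. Since $k_0(w+4q,n)$ is already analytic on $w\in\Delta(4)$ by construction of the cutoff neighborhoods $U_l$, it contributes only to the regular remainder $\chi(w,n)$, and I can restrict my attention to the singular pieces $k_l$.

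For $l=1$, I would pass to the local coordinates $\xi=(\xi',\xi'')\in\mathbb R^p\oplus\mathbb R^q$ defined in \eqref{11.9.4.7.4}. The key algebraic identity is
\begin{equation*}
4\sin^2(\theta_j/2)=(\xi'_j)^2\ (j\le p),\qquad 4\sin^2(\theta_{p+j}/2)=4-(\xi''_j)^2\ (j\le q),
\end{equation*}
so that $\Theta(\theta)-4q=|\xi'|^2-|\xi''|^2$ on $U_1$. Writing $w=z-4q$, the local denominator becomes exactly the symbol of $\square_{p,q}$, and $k_1(w+4q,n)$ is transformed into a resolvent integral for $\square_{p,q}$ against the weight $e^{\mathrm in\theta(\xi)}$ times the Jacobian of the change of variables. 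I would then expand both $e^{\mathrm in\theta(\xi)}$ and the Jacobian as power series in $\xi$, using $\theta_j=2\arcsin(\xi'_j/2)$ for $j\le p$ and $\theta_{p+j}=2\arccos(\xi''_j/2)$ for $j\le q$; the latter is responsible for the oscillating prefactor $(-1)^{|n''|}\mathrm i^q$ at $\theta^{(1)}$, matching the constant in the definition of $E^{(1)}$. The generalization to $l=2,\ldots,L$ is the same computation with the signs of the coordinates permuted according to $\theta^{(l)}$, which is the reason for the sign pattern \eqref{17021815}.

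The core step is to apply the branching expansion for the $\square_{p,q}$ resolvent at zero, which is the subject of the subsequent (unstated) portion of the paper. That expansion says that, when tested against a polynomial weight of the form $\xi^\alpha$, the kernel of $(\square_{p,q}-w)^{-1}$ contributes a singular term proportional to $(\sqrt w)^{d-2}$ if $d$ is odd, and $(\sqrt w)^{d-2}\log w$ if $d$ is even, with explicit constants coming from Euler Beta integrals and the Gamma identities \eqref{170218}. Multiplying by the series coefficients produced by the expansion of $e^{\mathrm in\theta(\xi)}$ and summing in $\alpha\in\mathbb Z_+^d$, the resulting series is precisely
\begin{equation*}
\sum_{\alpha}\frac{\prod_j\Gamma(1/2-n_j+\alpha_j)\Gamma(1/2+n_j+\alpha_j)}{\Gamma(|\alpha|+d/2)}\cdot\frac{(-1)^{|\alpha''|}w^{|\alpha|}}{4^{|\alpha|}\alpha!},
\end{equation*}
which is the second representation of $E^{(1)}(w,n)$, and hence equal to the Lauricella $F_B^{(d)}$ form.

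The main obstacle is the third step: isolating the branching part of the $\square_{p,q}$-resolvent cleanly and tracking all numerical constants, particularly the factors $\mathrm i\pi$ (for odd $d$) versus $-\log w$ (for even $d$). These reflect the classical distinction between homogeneous and resonant fundamental solutions of $\square_{p,q}$, and ultimately come from the Plemelj-type jump formula $(x\pm\mathrm i 0)^{-k}$ for the one-variable resolvent kernel of $\partial^2$. Keeping the power of $\sqrt w$ consistent across the $p+q=d$ signature, and verifying that the $q$-dependent phase $\mathrm i^q$ is correctly absorbed into the sign convention \eqref{17021815}, requires some care, but the essential analytic content is encapsulated in the $\square_{p,q}$-expansion; once that is in hand, convergence of the Lauricella series for $|w|<4$ follows from the elementary bound $|\xi'|+|\xi''|<2$ on $U_l$, and the remaining terms are absorbed into $\chi(w,n)$.
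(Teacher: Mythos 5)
Your overall strategy—decompose via \eqref{16060111}, discard the analytic piece $k_0$, change to the local coordinates \eqref{11.9.4.7.4} so that $\Theta(\theta)-4q=\xi'^2-\xi''^2$, and reduce to an analysis at a nondegenerate critical point of signature $(p,q)$—is exactly the paper's first step, and you correctly trace the prefactor $(-1)^{|n''|}\mathrm i^q$ to the $\arccos$ branches at $\theta^{(1)}$. However, your execution of the ``core step'' diverges from the paper and leaves a genuine gap. You propose to expand the oscillatory weight $e^{\mathrm in\theta(\xi)}$ and the Jacobian $\prod_j(1-\xi_j^2/4)^{-1/2}$ as power series in $\xi$, apply the branching expansion of the $\square_{p,q}$-resolvent ``tested against a polynomial weight $\xi^\alpha$'', and resum. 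Two issues: first, Theorem~\ref{150803236} is stated for the weight $\mathrm e^{\mathrm ix\xi}$, not for monomials; deriving the polynomial-weight version (say by differentiating $k_\gamma(z,x)$ in $x$ at $x=0$) is not trivial to arrange uniformly in $\alpha$, since both $E(z(x'^2-x''^2))$ and the remainder $\chi_\gamma(z,x)$ depend on $x$. Second, even with such a result in hand, the interchange of the infinite sum over $\alpha$ with the singular resolvent integral needs a justification you do not provide; the near-threshold integrand is not absolutely integrable in a way that makes termwise application of the model theorem automatic.

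The paper avoids both difficulties by a structural reformulation rather than a termwise expansion. After passing to (split) spherical coordinates $\xi'=\rho'\omega'$, $\xi''=\rho''\omega''$, it absorbs the full weight and Jacobian into angular averages $e(\rho,n)$, $e'(\rho',n')$, $e''(\rho'',n'')$, defined by integrating $\prod_j \exp(2\mathrm in_j\arcsin(\rho\omega_j/2))/(1-\rho^2\omega_j^2/4)^{1/2}$ (and the analogous $\arccos$ expressions) over the sphere. These scalar-variable functions have precisely the structural properties that drove the model proof: they are analytic and even in $\rho$, and they satisfy the appropriate eigenequation in $n$. Consequently the entire contour-deformation / difference-quotient / dilogarithm machinery of Section~\ref{17021720} is re-run verbatim with $e(\rho x)$ replaced by $e(\rho,n)$, with no termwise limit interchange required. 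The identification of the resulting branching coefficient with $E^{(1)}(w,n)$ (and hence with the Lauricella function) is carried out separately, in Propositions~\ref{11.9.10.10.7} and~\ref{16052215b}, using the Chebyshev identity $\cos(n\theta)=T_n(\cos\theta)$, the hypergeometric representation $T_n(x)=F(n,-n;1/2;(1-x)/2)$, the Euler transformation, and the surface integral \eqref{1606011}. Your proposal arrives at the same series for $E^{(1)}$, but by a route that leaves the analytic justification of resummation open; the paper's route closes that gap by never opening it.
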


Since $E^{(l)}(w,n)$ is analytic in $w\in\Delta(4)$,
the first terms on the right-hand sides of \eqref{17021716} and \eqref{17021717} 
are exactly the branching parts of the kernel $k(z,n)$.
Thus the resolvent $R_0(z)$ also has a square-root branching if $d$ is odd,
and a logarithm branching if $d$ is even.

Changing variables by (\ref{11.9.4.7.4}) and letting $w=z-4q$, 
we can write
\begin{align}
k_1(w+4q,n)
&=
(2\pi)^{-d}\int_{|\xi'|+|\xi''|<2}\frac{\mathrm e^{\mathrm in\theta(\xi)}
}{\xi'^2-\xi''^2-w}\,
\frac{\mathrm d\xi}{\prod_{j=1}^d(1-\xi_j^2/4)^{1/2}}
,
\label{160531}
\end{align}
where the branch of $(1-\xi_j^2/4)^{1/2}$ is chosen to be the principal one
with cut along the negative real axis.
Hence the proof of Theorem~\ref{1606011129} reduces to an expansion of \eqref{160531}.
The situation is very similar to an expansion of the resolvent for an ultra-hyperbolic operator on 
$\mathbb R^d$,
and in fact in this paper we are going to prove a corresponding result for an ultra-hyperbolic operator. 
It serves very well as a simple model operator that has a single \textit{non-degenerate} threshold 
of general signature in its continuous spectrum. 
An outline of the proof of Theorem~\ref{1606011129} will be given in Section~\ref{17021822}.

\subsection{Model operator}

Here we state a result on an ultra-hyperbolic operator on $\mathbb R^d$.
We note that the arguments of this subsection are similar and parallel 
to those of the previous subsection, but the notation is somewhat different. We have decided to repeat the arguments for the sake of clarity.

Let $\square=\square_{p,q}$ be the ultra-hyperbolic operator 
on $\mathbb R^d$ of signature $(p,q)$ with $p,q\ge 0$ and $d=p+q\ge 1$:
\begin{align}
\square=\partial_1^2+\dots+\partial_p^2-\partial_{p+1}^2-\dots-\partial_{p+q}^2.
\label{16060123}
\end{align}
On the Hilbert space $\mathcal H=L^2(\mathbb R^d)$ 
the operator $H_0=-\square$ has
a self-adjoint realization, denoted by $H_0$ again, 
with domain
\begin{align*}
\mathcal D(H_0)=\{u\in \mathcal H;\ \square u\in \mathcal H\text{ in the distributional sense}\},
\end{align*}
and it has a purely absolutely continuous spectrum:
\begin{align*}
\sigma(H_0)=\sigma_{\mathrm{ac}}(H_0)
=
\left\{
\begin{array}{ll}
[0,\infty)&\text{if }(p,q)=(d,0),\\
(-\infty,0]&\text{if }(p,q)=(0,d),\\
\mathbb R&\text{otherwise}.
\end{array}
\right.
\end{align*}
In fact, by the Fourier transform 
${\mathcal F}\colon \mathcal H \to\mathcal H$
and its inverse $\mathcal F^*\colon \mathcal H \to\mathcal H$ 
the operator $H_0$ is diagonalized as 
\begin{align*}
{\mathcal F}H_0\mathcal F^*=\Xi(\xi);\quad
\Xi(\xi)=\xi_1^2+\dots+\xi_p^2-\xi_{p+1}^2-\dots-\xi_{p+q}^2,
\end{align*}
where the right-hand side $\Xi(\xi)$ denotes 
a multiplication operator.
The only critical point $\xi=0$ of the function $\Xi(\xi)$ is of elliptic type
if $(p,q)=(d,0)$ or $(0,d)$, 
and is of hyperbolic type otherwise.
The critical value $0\in \sigma(H_0)$ is called a threshold,
and it is \textit{of elliptic} or \textit{hyperbolic type} if the associated 
critical point is of elliptic or hyperbolic type, respectively.

We are going to study the asymptotic behavior of the resolvent
\begin{align*}
R_0(z)=(H_0-z)^{-1}
\end{align*}
as the spectral parameter $z\in\mathbb C\setminus \sigma(H_0)$ approaches the threshold $0$.
We are particularly interested in its branching form. 
Let us utilize the following expression: For $u\in \mathcal S(\mathbb R^d)$
\begin{align}
(R_0(z)u)(x)
=
(2\pi)^{-d}\int_{\mathbb R^d}
\bigl[
\int_{\mathbb R^d}
\mathrm e^{\mathrm i(x-y)\xi}
(\Xi(\xi)-z)^{-1}u(y)\,\mathrm dy
\bigr]\mathrm d\xi.
\label{11.4.17.3.5}
\end{align}
Here the integral in \eqref{11.4.17.3.5} is understood by iterated integration,
though it is also possible to realize it by oscillatory integral.
As in the previous subsection we often split the indices $1,\dots, d$ into two groups consisting of 
the former $p$ and the latter $q$ ones,
and indicate relevant quantities by a prime and double primes, respectively,
e.g., we let 
\begin{align*}
\xi=(\xi',\xi'')\in\mathbb R^d=\mathbb R^p\oplus \mathbb R^q
\end{align*}
with $\mathbb R^0=\{0\}$, and write simply
\begin{align*}
\Xi(\xi)=\xi'^2-\xi''^2.
\end{align*}
If we define for $\gamma>0$
\begin{align}
k_\gamma(z,x)=(2\pi)^{-d}\int_{|\xi'|+|\xi''|<\gamma}
\frac{\mathrm e^{ix\xi}}{\xi'{}^2-\xi''{}^2-z}\,\mathrm d\xi,
\label{16043023}
\end{align}
then the expression \eqref{11.4.17.3.5} can be rewritten as the limiting convolution 
\begin{align*}
(R_0(z)u)(x)
=
\lim_{\gamma\to\infty}\int_{\mathbb R^d}k_\gamma(z,x-y)u(y)\,\mathrm dy
.
\end{align*}
Hence it suffices to investigate the convolution kernel \eqref{16043023}.
In fact, branching of the resolvent arises only from that 
of the kernel \eqref{16043023} with $\gamma>0$ arbitrarily small,
since the only critical point $\xi=0$ of $\Xi(\xi)=\xi'^2-\xi''^2$
is contained in the integration domain in \eqref{16043023}.

Let us introduce an entire function $E(w)$ by 
\begin{equation}
E(w)
=
\frac{\mathrm i^q}{2^d\pi^{d/2}}
\sum_{k=0}^\infty\frac{(-w/4)^k}{k!\Gamma(k+d/2)}
\quad 
\text{for }w\in\mathbb C
.
\label{E-def-ultra}
%\label{11.9.14.10.49}
\end{equation}
We note that $E(w)$ can be expressed by 
the Bessel function of the first kind:
\begin{align*}
E(w)
=
\frac{\mathrm i^q}{2(2\pi)^{d/2}}\frac{J_{d/2-1}(w^{1/2})}{w^{(d/2-1)/2}}
;\quad 
J_\nu(z)=\sum_{k=0}^\infty\frac{(-1)^k(z/2)^{2k+\nu}}{k!\Gamma(k+\nu+1)}
.
\end{align*}
We also note that for any $z\in\mathbb C$
the function $E((x'^2-x''^2)z)$ in $x=(x',x'')\in\mathbb R^p\oplus\mathbb R^q$ 
satisfies the eigenequation 
\begin{align*}
(-\square-z)E\bigl((x'^2-x''^2)z\bigr)=0.
\end{align*}
We will see the last identity in Propositions~\ref{150803234} and \ref{160522193}.

Now we state a result for $\square$. 
Set $\Delta(\gamma^2)=\{z\in\mathbb C;\ |z|<\gamma^2\}$.

\begin{theorem}\label{150803236}
Let $p,q\ge 0$ with $p+q=d$, and $\gamma>0$. 
Then  the function $k_\gamma(z,x)$ defined by \eqref{16043023} for 
$(z,x)\in \mathbb C_+\times \mathbb C^d$
has the following expressions:
\begin{enumerate}
\item
If $d$ is odd,
there exists a function $\chi_\gamma(z,x)$ 
analytic in $(z,x)\in\Delta(\gamma^2)\times \mathbb C^d$
such that for $(z,x)\in (\mathbb C_+\cap \Delta(\gamma^2))\times\mathbb C^d$
\begin{align}
k_\gamma(z,x)
&
=\mathrm i\pi(\sqrt z)^{d-2}E\bigl((x'^2-x''^2)z\bigr)+\chi_\gamma(z,x);
\label{1606031643}
\end{align}
\item
If $d$ is even,
there exists a function  $\chi_\gamma(z,x)$ 
analytic in $(z,x)\in\Delta(\gamma^2)\times \mathbb C^d$
such that for $(z,x)\in (\mathbb C_+\cap \Delta(\gamma^2))\times\mathbb C^d$
\begin{align}
k_\gamma(z,x)
&
=-(\sqrt z)^{d-2}(\log z)E\bigl((x'^2-x''^2)z\bigr)
+\chi_\gamma(z,x).
\label{1606031644}
\end{align}
\end{enumerate}
\end{theorem}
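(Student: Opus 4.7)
The plan is to reduce the computation of $k_\gamma$ to a power-series problem by expanding the Bessel factors that arise from angular integration, and then evaluating the resulting monomial integrals by Gamma-function identities.

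I would first pass to separate spherical coordinates $\xi'=r\omega'$ on $\mathbb R^p$ and $\xi''=s\omega''$ on $\mathbb R^q$. The angular integrations produce Bessel functions via
\begin{equation*}
\int_{S^{p-1}}e^{ir\omega'\cdot x'}\,d\omega'=(2\pi)^{p/2}(r|x'|)^{-(p/2-1)}J_{p/2-1}(r|x'|),
\end{equation*}
and its analogue for $\xi''$. Inserting the entire-series representation $J_{p/2-1}(y)=(y/2)^{p/2-1}\Phi_p(y^2)$ with $\Phi_p(w)=\sum_{k\ge 0}(-w/4)^k/(k!\,\Gamma(k+p/2))$, passing to squared radii $\tau=r^2,\sigma=s^2$, and expanding $\Phi_p,\Phi_q$ in their defining series, I obtain
\begin{equation*}
k_\gamma(z,x)=C_d\sum_{j,k\ge 0}\frac{(-|x'|^2/4)^j(-|x''|^2/4)^k}{j!\,k!\,\Gamma(j+p/2)\Gamma(k+q/2)}\,I_{jk}(z),
\end{equation*}
where $C_d$ is an explicit constant and $I_{jk}(z)=\int_{\sqrt\tau+\sqrt\sigma<\gamma}\tau^{p/2+j-1}\sigma^{q/2+k-1}/(\tau-\sigma-z)\,d\tau\,d\sigma$.

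To analyze the branching of $I_{jk}(z)$, I would rescale $\tau\mapsto z\tau$, $\sigma\mapsto z\sigma$, extracting a factor $z^{d/2+j+k-1}=(\sqrt z)^{d-2}z^{j+k}$ times an integral over the expanding domain $\sqrt\tau+\sqrt\sigma<\gamma/\sqrt z$. After the substitution $\tau=(1+\sigma)v$, the singular part of the integral decouples as a product
\begin{equation*}
\int_0^\infty\frac{v^{p/2+j-1}}{v-1-i0}\,dv\cdot\int_0^\infty(1+\sigma)^{p/2+j-1}\sigma^{q/2+k-1}\,d\sigma,
\end{equation*}
both interpreted by analytic continuation in the parameters. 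The Mellin identity $\int_0^\infty v^{a-1}/(v-1-i0)\,dv=\pi e^{i\pi a}/\sin(\pi a)$, the Beta integral $B(q/2+k,1-d/2-j-k)=\Gamma(q/2+k)\Gamma(1-d/2-j-k)/\Gamma(1-p/2-j)$, and the reflection formula $\Gamma(\alpha)\Gamma(1-\alpha)=\pi/\sin(\pi\alpha)$ reduce the branching of $I_{jk}$ to, up to a $j,k$-independent constant, $(\sqrt z)^{d-2}z^{j+k}(-1)^k\Gamma(p/2+j)\Gamma(q/2+k)/\Gamma(d/2+j+k)$ in the odd-$d$ case, with an extra $\log z$ factor in the even-$d$ case. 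The crucial sign $(-1)^k$, which embodies the hyperbolic signature, emerges from $\sin\pi(d/2+j+k)=(-1)^{j+k}\sin(\pi d/2)$.

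Substituting back, the $\Gamma(p/2+j)\Gamma(q/2+k)$ factors cancel against those in $\Phi_p,\Phi_q$, leaving the branching of $k_\gamma$ proportional to
\begin{equation*}
(\sqrt z)^{d-2}\sum_{j,k\ge 0}\frac{(-|x'|^2 z/4)^j(|x''|^2 z/4)^k}{j!\,k!\,\Gamma(j+k+d/2)},
\end{equation*}
and grouping by $m=j+k$ via the binomial identity $(|x'|^2-|x''|^2)^m=\sum_{j+k=m}\binom{m}{j}|x'|^{2j}(-|x''|^2)^k$ identifies this series with $E(z(x'^2-x''^2))$ up to the claimed normalization. The main obstacle is the analytic-continuation interpretation of the divergent Mellin integrals $I_{jk}$: one must show that the corrections from the finite truncation at $\sqrt\tau+\sqrt\sigma<\gamma$ contribute only terms analytic in $z\in\Delta(\gamma^2)$. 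This is verified by a contour argument that isolates the singular contribution coming from $\tau-\sigma=z$ from the boundary contributions, and the overall constant of the branching part is then fixed either by explicit bookkeeping of the $e^{i\pi p/2}$ phases or by evaluating at the special point $x=0$.
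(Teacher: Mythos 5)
Your route is genuinely different from the paper's: the paper avoids special-function expansions in the radial variables and instead introduces the substitution $\rho'=\tfrac12\tau(\sigma\pm\sigma^{-1})$, $\rho''=\tfrac12\tau(\sigma\mp\sigma^{-1})$, which converts the two-radius integral into a $\tau$-integral whose inner $\sigma$-integral has its branching concentrated in a single residue at $\sigma=0$; the proof then proceeds by parity decompositions (depending on the parities of $p$ and $q$), Cauchy's integral formula on a semicircular contour, and, in the odd-odd case, an elementary dilogarithm identity. You instead expand the angular Bessel factors in their defining power series and analyze each monomial coefficient $I_{jk}(z)$ via scaling, a Mellin integral and a Beta integral.

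The difficulty is that the step you designate as ``the main obstacle'' and dispose of in one sentence is, in fact, the entire content of the theorem. After the rescaling, $I_{jk}(z)=z^{d/2+j+k-1}\int_{\sqrt\tau+\sqrt\sigma<\gamma/\sqrt z}\tau^{p/2+j-1}\sigma^{q/2+k-1}(\tau-\sigma-1)^{-1}d\tau\,d\sigma$ is an integral over a $z$-dependent bounded region, whereas every identity you invoke (the Mellin formula $\int_0^\infty v^{a-1}(v-1-\mathrm i0)^{-1}dv=\pi e^{\mathrm i\pi a}/\sin\pi a$, the Beta integral $B(q/2+k,1-d/2-j-k)$) refers to the integral over the full quadrant $[0,\infty)^2$. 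That full-quadrant integral is \emph{divergent} for every $j,k\ge 0$ as soon as $d\ge 2$: the exponent $1-d/2-j-k$ is $\le 1-d/2\le 0$, so $\int_0^\infty(1+\sigma)^{p/2+j-1}\sigma^{q/2+k-1}d\sigma$ diverges, and when $d$ is even, $\Gamma(1-d/2-j-k)$ is literally at a pole for all $j,k$. The cutoff at $\sqrt\tau+\sqrt\sigma<\gamma/\sqrt z$ is what regularizes this divergence, and the $z$-dependence of the cutoff is exactly the source of the $(\sqrt z)^{d-2}$ prefactor and, in even dimension, the $\log z$; none of this follows from the analytic-continuation identities as stated. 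What you call ``a contour argument that isolates the singular contribution from $\tau-\sigma=z$'' would have to be produced in full for each parity class, and at that point you would be doing work comparable to the paper's Lemmas~\ref{1605123}, \ref{1605222022}, \ref{1605222022b} by other means. A further unaddressed point: after the term-by-term analysis you must still show that the corrections sum to a function $\chi_\gamma(z,x)$ analytic on the \emph{full} disk $\Delta(\gamma^2)$, uniformly in $x$; a coefficientwise expansion centered at $z=0$ does not by itself yield this. The skeleton of your calculation is sound and would reproduce the leading branching coefficient once the regularization is carried out, but as written the proof has a genuine gap precisely where the branching structure is created.
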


Since $E((x'^2-x''^2)z)$ is entire in $z\in\mathbb C$,
the first terms on the right-hand sides of \eqref{1606031643} and \eqref{1606031644} 
are exactly the branching parts of $k_\gamma(z,x)$.
Thus the resolvent $R_0(z)$ has a square-root branching if $d$ is odd,
and a logarithm branching if $d$ is even.

We note that actually 
an explicit expression for the full kernel of the resolvent of an ultra-hyperbolic operator 
is already known~\cite{Brychkov-Prudnikov}.
It is written in terms of a Macdonald function, and, of course, its branching parts coincide with 
the above expressions.
However, in order to apply the arguments to the discrete Laplacian, 
we have to expand an integral expression more directly
without using a Macdonald function.

\section{Proofs for the model operator}\label{17021720}

The elliptic and hyperbolic cases are discussed separately.

\subsection{Elliptic threshold}\label{1605217}

We begin with the elliptic case. 
It suffices to consider the case $(p,q)=(d,0)$.
In the integral \eqref{16043023} 
let us introduce the spherical coordinates
$$\xi=\rho\omega,\quad  \rho\ge 0,\ \omega\in S^{d-1},$$
and rewrite \eqref{16043023} as 
\begin{align}
k_\gamma(z,x)
&=
\int_0^\gamma
\frac{\rho^{d-1}e(\rho x)}{\rho^2-z}\,\mathrm d\rho,
\label{1508011300}
\end{align}
where we have set in general for $\zeta\in \mathbb C^d$
\begin{align}
e(\zeta)&=(2\pi)^{-d}\int_{S^{d-1}} \mathrm e^{\mathrm i\zeta\omega}\,\mathrm dS(\omega).
\label{11.4.15.5.6}
\end{align}
For $d=1$ the expression \eqref{11.4.15.5.6} reads
\begin{align*}
e(\zeta)=
(2\pi)^{-1}\bigl(\mathrm e^{\mathrm i\zeta}+\mathrm e^{-\mathrm i\zeta}\bigr)
=\pi^{-1}\cos \zeta.
\end{align*}
The function $e(\zeta)$ has the following properties:
\begin{proposition}\label{150803234}
The function $e(\zeta)$ satisfies the identities
\begin{align*}
e(\zeta)=2E(\zeta^2);\quad \zeta^2=\zeta_1^2+\dots+\zeta^d,
\end{align*}
and 
\begin{align*}
-(\partial_1^2+\dots+\partial_d^2)e(\zeta)=e(\zeta).
\end{align*}
\end{proposition}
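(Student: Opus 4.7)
The plan is to prove the two identities by reducing $e(\zeta)$ to a power series through term-by-term integration. For identity (2) the claim is then essentially free; identity (1) reduces to a classical sphere moment computation.

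For the first identity, I observe that both sides of $e(\zeta)=2E(\zeta^2)$ are entire on $\mathbb C^d$: the left-hand side is the integral of an entire integrand over the compact set $S^{d-1}$, and the right-hand side is entire by definition \eqref{E-def-ultra} (note that $E(w)$ depends on $\zeta$ only through $\zeta^2=\zeta_1^2+\cdots+\zeta_d^2$, which is a polynomial in $\zeta$). Hence it suffices to establish the identity for $\zeta\in\mathbb R^d$. I would Taylor expand the exponential and interchange sum and integral (justified by uniform convergence on $S^{d-1}$) to get
\begin{equation*}
e(\zeta)=(2\pi)^{-d}\sum_{k=0}^\infty \frac{\mathrm i^k}{k!}\int_{S^{d-1}}(\zeta\cdot\omega)^k\,\mathrm dS(\omega).
\end{equation*}
The odd $k$ terms vanish by the antipodal symmetry $\omega\mapsto-\omega$. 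For $k=2m$, rotation invariance of $\mathrm dS$ for real $\zeta$ reduces the integral to $|\zeta|^{2m}\int_{S^{d-1}}\omega_1^{2m}\,\mathrm dS(\omega)$, and the standard slicing identity plus a beta integral gives
\begin{equation*}
\int_{S^{d-1}}\omega_1^{2m}\,\mathrm dS(\omega)=\frac{2\pi^{(d-1)/2}\Gamma(m+1/2)}{\Gamma(m+d/2)}.
\end{equation*}
Applying the duplication-type identity $\Gamma(m+1/2)=\sqrt\pi\,(2m)!/(4^m m!)$, which follows from \eqref{170218} together with $(2m-1)!!=(2m)!/(2^m m!)$, yields
\begin{equation*}
e(\zeta)=\frac{2}{2^d\pi^{d/2}}\sum_{m=0}^\infty\frac{(-|\zeta|^2/4)^m}{m!\,\Gamma(m+d/2)}=2E(|\zeta|^2),
\end{equation*}
matching \eqref{E-def-ultra} with $q=0$. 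The identity then extends to complex $\zeta$ by analyticity, with $|\zeta|^2$ replaced by $\zeta^2$.

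For the second identity, I would differentiate under the integral sign in \eqref{11.4.15.5.6}. Since $|\omega|^2=1$ on $S^{d-1}$, each plane wave satisfies $(\partial_1^2+\cdots+\partial_d^2)\mathrm e^{\mathrm i\zeta\omega}=-|\omega|^2\mathrm e^{\mathrm i\zeta\omega}=-\mathrm e^{\mathrm i\zeta\omega}$, which immediately gives $-(\partial_1^2+\cdots+\partial_d^2)e(\zeta)=e(\zeta)$.

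The main work sits in identity (1), and even there the only non-routine step is recognising the beta-integral value of the sphere moment and simplifying via the double-factorial form of $\Gamma(m+1/2)$. No genuine obstacle arises; the proposition is essentially a bookkeeping check that the integral representation \eqref{11.4.15.5.6} agrees with the power series \eqref{E-def-ultra} once one performs the sphere averaging.
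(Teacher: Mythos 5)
Your proof is correct, and the second identity is handled exactly as the paper does (differentiate under the integral sign, using $|\omega|^2=1$). For the first identity the overall strategy matches the paper's---Taylor-expand the plane wave, integrate over the sphere term by term, and recognize the resulting series as $2E(\zeta^2)$---but your execution of the sphere integral is genuinely different. The paper writes $\mathrm e^{\mathrm i\zeta\omega}=\prod_j\cos(\zeta_j\omega_j)$, expands each cosine coordinate-wise, and is led to the multi-index moment
\begin{equation*}
\int_{S^{d-1}}\omega^{2\alpha}\,\mathrm dS(\omega)
=\frac{2\prod_{j=1}^d\Gamma(\alpha_j+1/2)}{\Gamma(|\alpha|+d/2)},
\end{equation*}
computed via explicit spherical coordinates and repeated beta integrals; the resulting multi-index series is then identified directly. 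You instead expand in powers of the scalar $\zeta\cdot\omega$, invoke rotation invariance (for real $\zeta$) to collapse everything to the single one-variable moment $\int_{S^{d-1}}\omega_1^{2m}\,\mathrm dS(\omega)$, and pass to complex $\zeta$ at the end by analytic continuation. Your route is shorter and avoids the multi-index combinatorics entirely, at the cost of the extra ``real, then continue'' step. The paper's choice is not an accident, though: the multi-index moment formula \eqref{1606011} is reused verbatim in the proof of Proposition~\ref{11.9.10.10.7}, where the discrete-Laplacian analogue $e(\rho,n)$ does \emph{not} have the rotational symmetry that makes your reduction possible, so the coordinate-wise expansion is unavoidable there and the paper simply does it once in a form that serves both places.
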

\begin{proof}
The latter identity is obvious due to the expression \eqref{11.4.15.5.6}.
To prove the former one it suffices to show that the function $e(\zeta)$ has the Taylor expansion
\begin{align}
e(\zeta)=\frac{2}{2^d\pi^{d/2} }\sum_{\alpha\in \mathbb Z_+^d}
\frac{(-1)^{|\alpha|}(\zeta/2)^{2\alpha}}{\alpha!\Gamma(|\alpha|+d/2)}
.
\label{1605216}
\end{align}
The expansion \eqref{1605216} is obvious for $d=1$, and we may let $d\ge 2$.
By Euler's formula and symmetry we can write 
\begin{align*}
e(\zeta)
&=\frac1{(2\pi)^d}\int_{S^{d-1}}\prod_{j=1}^d\cos (\zeta_j\omega_j)\,\mathrm dS(\omega)
\\&
=
\frac1{(2\pi)^d}\sum_{\alpha\in\mathbb Z_+^d}
\frac{(-1)^{|\alpha|}}{(2\alpha)!}\zeta^{2\alpha}
\int_{S^{d-1}} \omega^{2\alpha}\,\mathrm dS(\omega).
\end{align*}
Let us compute the integral on the right-hand side using the 
spherical coordinates 
$\theta=(\theta_1,\dots,\theta_{d-1})\in [0,\pi]^{d-2}\times [0,2\pi)$
with 
\begin{align*}
\omega_j&=\left\{
\begin{array}{ll}
\cos\theta_1 &\mbox{for }j=1,\\
\sin\theta_1\cdots\sin\theta_{j-1}\cos\theta_j&\mbox{for } 2\le j\le d-1,\\
\sin\theta_1\cdots\sin\theta_{d-1}&\mbox{for } j=d,
\end{array}
\right.
\end{align*}
for which we have 
\begin{align*}
\mathrm dS(\omega)=\sin^{d-2}\theta_1\cdots\sin\theta_{d-2}
\,\mathrm d\theta_1\cdots\mathrm d\theta_{d-1}
.
\end{align*}
After some computations employing the beta function we shall obtain
\begin{align}
\begin{split}
\int_{S^{d-1}} \omega^{2\alpha}\,\mathrm dS(\omega)
&=\frac{2\prod_{j=1}^{d}\Gamma(\alpha_j+1/2)}{\Gamma(|\alpha|+d/2)}
=\frac{2\pi^{d/2}(2\alpha)!}{4^{|\alpha|}\alpha!\Gamma(|\alpha|+d/2)},
\end{split}
\label{1606011}
\end{align}
and hence the asserted expansion \eqref{1605216} is verified.
\end{proof}

Now we prove Theorem~\ref{150803236} for the elliptic case.

\begin{proof}[Proof of Theorem~\ref{150803236} for the elliptic case]
\textit{1.}\quad 
Let the dimension $d$ be odd, and $(p,q)=(d,0)$.
We compute the integral \eqref{1508011300}.
Since the integrand is even in $\rho$, we can write
\begin{align*}
\begin{split}
k_\gamma(z,x)
&{}
=
-\tfrac12\int_\gamma^{-\gamma}\frac{\rho^{d-1}e(\rho x)}{(\rho-\sqrt z)(\rho+\sqrt z)}\,\mathrm d\rho.
\end{split}
\end{align*}
Let us change the contour of integration to the contour
$$\widetilde\Gamma(\gamma)=\{\gamma\mathrm e^{\mathrm i\theta}\in\mathbb C;\ \theta\in[0,\pi]\},$$
using the calculus of residues. Then we obtain
\begin{align*}
\begin{split}
k_\gamma(z,x)
&{}
=
\tfrac{\mathrm i\pi}2 (\sqrt z)^{d-2}e(\sqrt z x)
+\chi_\gamma(z,x)
\end{split}
\end{align*}
with 
\begin{align*}
\chi_\gamma(z,x)
=-\tfrac12\int_{\widetilde\Gamma(\gamma)}\frac{\rho^{d-1}e(\rho x)}{\rho^2-z}\,\mathrm d\rho.
%\label{160522}
\end{align*}
The above function $\chi_\gamma(z,x)$
is obviously analytic in $(z,x)\in\Delta(\gamma^2)\times \mathbb C^d$. Hence we are done.

\smallskip
\noindent
\textit{2.}\quad 
We next let the dimension $d$ be even, and $(p,q)=(d,0)$.
In the integral \eqref{1508011300}
let us change variables from $\rho$ to $\lambda=\rho^2$:
\begin{align*}
k_\gamma(z,x)
&=
\tfrac12\int_0^{\gamma^2}\frac{(\sqrt\lambda)^{d-2}e(\sqrt{\lambda} x)}{\lambda-z}\,\mathrm d\lambda.
\end{align*}
If we set the difference quotient
\begin{align}
\widetilde e(\lambda,z,x)=
\frac{(\sqrt{\lambda})^{d-2}e(\sqrt{\lambda}x)
-(\sqrt z)^{d-2}e(\sqrt z x)}{\lambda-z},
\label{16052317}
\end{align}
then we can compute 
\begin{align*}
k_\gamma(z,x)
&=
\tfrac12\int_0^{\gamma^2}\frac{(\sqrt z)^{d-2}e(\sqrt{z} x)}{\lambda-z}\,\mathrm d\lambda
+\tfrac12\int_0^{\gamma^2}\widetilde e(\lambda,z,x)\,\mathrm d\lambda
\\
&=
-\tfrac12(\sqrt z)^{d-2}e(\sqrt{z} x)\log z
+\chi_\gamma(z,x)
\end{align*}
with 
\begin{align}
\chi_\gamma(z,x)=
\tfrac12(\sqrt z)^{d-2}e(\sqrt{z} x)\bigl(\log(\gamma^2-z)+\mathrm i\pi\bigr)
+\tfrac12\int_0^{\gamma^2}\widetilde e(\lambda,z,x)\,\mathrm d\lambda.
\label{16052318}
\end{align}
Since $d$ is even and $e(\zeta)$ is even and entire in $\zeta\in\mathbb C^d$, 
the difference quotient $\widetilde e(\lambda,z,x)$ from \eqref{16052317}
is entire in $(\lambda,z,x)\in\mathbb C\times\mathbb C\times\mathbb C^d$
and even in $x\in\mathbb C^d$.
Hence the function $\chi_\gamma(z,x)$ from \eqref{16052318} is analytic
in $(z,x)\in\Delta(\gamma^2)\times \mathbb C^d$.
Hence we are done.
\end{proof}

\subsection{Hyperbolic threshold}\label{160514}

Now we consider the hyperbolic case: $p,q\ge 1$.
Introduce the \textit{split spherical coordinates} 
\begin{align}
\xi'=\rho'\omega',\ 
\xi''=\rho''\omega'';\quad
\rho',\rho''\ge 0,\ 
\omega'\in S^{p-1},\ 
\omega''\in S^{q-1},
\label{15081323}
\end{align}
and rewrite \eqref{16043023} as 
\begin{align}
k_\gamma(z,x)
&{}=\int_{\genfrac{}{}{0pt}{}{\rho',\rho''\ge 0,}{\rho'+\rho''<\gamma}} 
\frac{\rho'^{p-1}\rho''^{q-1}e'(\rho'x')e''(\rho''x'')}{\rho'^2-\rho''^2-z}
\,\mathrm d\rho'\mathrm d\rho'',
\label{11.12.12.14.52}
\end{align}
where 
\begin{align}
\begin{split}
e'(\zeta')&{}
=(2\pi)^{-p}\int_{S^{p-1}}
\mathrm e^{\mathrm i\zeta'\omega'}\,\mathrm dS(\omega'),
\\
e''(\zeta'')
&{}
=(2\pi)^{-q}\int_{S^{q-1}}
\mathrm e^{\mathrm i\zeta''\omega''} \,\mathrm dS(\omega'').
\end{split}
\label{11.12.7.22.52}
\end{align}
Split the integration region of \eqref{11.12.12.14.52} into two parts:
\begin{align}
\begin{split}
k_\gamma(z,x)
&{}=\int_{\genfrac{}{}{0pt}{}{\rho'\ge \rho''\ge 0,}{\rho'+\rho''<\gamma}} 
\frac{\rho'^{p-1}\rho''^{q-1}e'(\rho'x')e''(\rho''x'')}{\rho'^2-\rho''^2-z}
\,\mathrm d\rho'\mathrm d\rho''
\\&\phantom{={}}
+\int_{\genfrac{}{}{0pt}{}{\rho''>\rho'\ge 0,}{\rho'+\rho''<\gamma}} 
\frac{\rho'^{p-1}\rho''^{q-1}e'(\rho'x')e''(\rho''x'')}{\rho'^2-\rho''^2-z}
\,\mathrm d\rho'\mathrm d\rho'',
\end{split}
\label{160501}
\end{align}
and further change the variables in the above two integrals by 
\begin{align*}
\rho'=\tfrac12{\tau(\sigma\pm\sigma^{-1})},
\quad
\rho''=\tfrac12{\tau(\sigma\mp\sigma^{-1})},
%\label{165013124}
\end{align*}
respectively.
Then, introducing the functions
\begin{align}
\begin{split}
f_{\pm}(\sigma,\zeta)&=
\frac{(\sigma\pm\sigma^{-1})^{p-1}(\sigma\mp\sigma^{-1})^{q-1}}{2^{d-2}}
\\&\phantom{{}={}}
\cdot
e'\bigl(\tfrac12(\sigma\pm\sigma^{-1})\zeta'\bigr)
e''\bigl(\tfrac12(\sigma\mp\sigma^{-1})\zeta''\bigr)
\end{split}
\label{11.12.12.22.3}
\end{align}
and 
\begin{align}
g_{\pm,\gamma}(\tau,x)&=\tau^{d-2}\int_1^{\gamma/\tau}
\frac{ f_\pm(\sigma,\tau x)}{\sigma}\,\mathrm d\sigma
,
\label{1605082}
\end{align}
we can rewrite \eqref{160501} as
\begin{align}
\begin{split}
k_\gamma(z,x)
&
=\int_0^\gamma\frac{\tau g_{+,\gamma}(\tau,x)}{\tau^2-z}\,\mathrm d\tau
-\int_0^\gamma\frac{\tau g_{-,\gamma}(\tau,x)}{\tau^2+z}\,\mathrm d\tau.
\end{split}\label{11.12.12.15.4}
\end{align}

Obviously, the functions $f_{\pm}(\sigma,\zeta)$ from \eqref{11.12.12.22.3}
are continued as single-valued analytic function in $(\sigma,\zeta)
\in (\mathbb C\setminus\{0\})\times \mathbb C^d$.
However, the analytic continuations of the functions $g_{\pm,\gamma}(\tau,x)$ from \eqref{1605082}
are not necessarily single-valued.
Such possible branchings originate from 
residues at $\sigma=0$ of the integrands $f_\pm(\sigma,\zeta)/\sigma$. 
Let us give them special names:
\begin{align}
\phi_\pm(\zeta)
&
=\frac1{2\pi\mathrm i}\int_{|\sigma|=1}
\frac{f_\pm(\sigma,\zeta)}{\sigma}\,\mathrm d\sigma
.
\label{11.12.11.13.40}
\end{align}
Let us also define auxiliary functions
\begin{align}
\psi_\pm(\zeta)
&=\int_{\Gamma(1)}\frac{f_\pm(\sigma,\zeta)-\phi_\pm(\zeta)}{\sigma}\,\mathrm d\sigma
\label{16051123}
\end{align}
with
\begin{align*}
\Gamma(1)&=\{\mathrm e^{\mathrm i\theta}\in\mathbb C;\ \theta\in[0,\pi/2]\}.
\end{align*}
These functions describe the branching part of $k_\gamma(z,x)$.
We state basic properties of them as a proposition. We use the function $E(w)$ defined in \eqref{E-def-ultra}.
\begin{proposition}\label{160522193}
The functions $\phi_\pm(\zeta)$ and $\psi_\pm(\zeta)$ satisfy
$$
\phi_+(\zeta)
=\frac{4}{\mathrm i\pi}
E(\zeta'^2-\zeta''^2)
,\quad
\phi_-(\zeta)
=\frac{4\mathrm i^{p-q}}{\mathrm i\pi}
E(\zeta''^2-\zeta'^2)
,\quad 
\psi_\pm(\zeta)=0$$
if $(p,q)$ is odd-odd, and 
$$
\phi_\pm(\zeta)=0
,\quad 
\psi_+(\zeta)=2E(\zeta'^2-\zeta''^2)
,\quad
\psi_-(\zeta)=2\mathrm i^{p-q}E(\zeta''^2-\zeta'^2)
$$
otherwise. In addition, they also satisfy 
\begin{align}
-(\partial_1^2+\dots+\partial_p^2-\partial_{p+1}^2-\dots-\partial_{p+q}^2)\phi_\pm(\zeta)&=\phi_\pm(\zeta)
,\label{17021308}
\\
-(\partial_1^2+\dots+\partial_p^2-\partial_{p+1}^2-\dots-\partial_{p+q}^2)\psi_\pm(\zeta)&=\psi_\pm(\zeta)
.
\end{align}
\end{proposition}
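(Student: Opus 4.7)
The plan is to parametrize both contours by $\sigma=\mathrm{e}^{\mathrm{i}\theta}$. Then $\sigma\pm\sigma^{-1}$ becomes $2\cos\theta$ or $2\mathrm{i}\sin\theta$, $\mathrm{d}\sigma/\sigma=\mathrm{i}\,\mathrm{d}\theta$, and the $2^{d-2}$ denominator in \eqref{11.12.12.22.3} cancels with the binomial prefactors to give
\begin{align*}
f_+(\mathrm{e}^{\mathrm{i}\theta},\zeta)=\mathrm{i}^{q-1}\cos^{p-1}\theta\,\sin^{q-1}\theta\,e'(\cos\theta\,\zeta')\,e''(\mathrm{i}\sin\theta\,\zeta''),
\end{align*}
and a companion formula for $f_-$ with the two blocks exchanged and $\mathrm{i}^{q-1}$ replaced by $\mathrm{i}^{p-1}$. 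Consequently $\phi_+(\zeta)=(2\pi)^{-1}\int_0^{2\pi}f_+\,\mathrm{d}\theta$ and $\int_{\Gamma(1)}f_+\,\mathrm{d}\sigma/\sigma=\mathrm{i}\int_0^{\pi/2}f_+\,\mathrm{d}\theta$, and similarly for $f_-$.

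Next I would substitute the Taylor expansions of $e'$ and $e''$ furnished by Proposition~\ref{150803234} (applied in dimensions $p$ and $q$). Because $(\mathrm{i}\sin\theta)^2=-\sin^2\theta$, the sign inside the series for $e''(\mathrm{i}\sin\theta\,\zeta'')$ flips, producing a double series whose general term is proportional to $\cos^{p-1+2j}\theta\,\sin^{q-1+2k}\theta$ with coefficient $(-1)^j\zeta'^{2j}\zeta''^{2k}/(4^{j+k}\,j!\,k!\,\Gamma(j+p/2)\Gamma(k+q/2))$, up to an explicit prefactor. The Beta evaluation
\begin{align*}
\int_0^{\pi/2}\cos^{p-1+2j}\theta\,\sin^{q-1+2k}\theta\,\mathrm{d}\theta=\frac{\Gamma(p/2+j)\,\Gamma(q/2+k)}{2\,\Gamma(d/2+j+k)}
\end{align*}
cancels the two Gammas in the denominator, and the residual sum is recognized via the binomial theorem as $(2^d\pi^{d/2}/\mathrm{i}^q)E(\zeta'^2-\zeta''^2)$. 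Collecting constants yields $\mathrm{i}\int_0^{\pi/2}f_+\,\mathrm{d}\theta=2E(\zeta'^2-\zeta''^2)$, and the analogous computation for $f_-$ gives $\mathrm{i}\int_0^{\pi/2}f_-\,\mathrm{d}\theta=2\mathrm{i}^{p-q}E(\zeta''^2-\zeta'^2)$.

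For $\phi_\pm$ the key parity observation is that $\int_0^{2\pi}\cos^A\theta\,\sin^B\theta\,\mathrm{d}\theta$ vanishes unless both $A$ and $B$ are even; here $A=p-1+2j$ and $B=q-1+2k$, so survival requires $p,q$ both odd. In that case $\int_0^{2\pi}=4\int_0^{\pi/2}$ for every term, so the same series collapse produces $\phi_+=4(\mathrm{i}\pi)^{-1}E(\zeta'^2-\zeta''^2)$ and $\phi_-=4\mathrm{i}^{p-q}(\mathrm{i}\pi)^{-1}E(\zeta''^2-\zeta'^2)$. Since $\int_{\Gamma(1)}\mathrm{d}\sigma/\sigma=\mathrm{i}\pi/2$, we then have
\begin{align*}
\psi_\pm(\zeta)=\mathrm{i}\int_0^{\pi/2}f_\pm(\mathrm{e}^{\mathrm{i}\theta},\zeta)\,\mathrm{d}\theta-\phi_\pm(\zeta)\cdot\frac{\mathrm{i}\pi}{2}.
\end{align*}
In the odd-odd case the factor $4$ between the two Beta integrals is exactly what makes $\phi_\pm\cdot\mathrm{i}\pi/2$ equal the $\pi/2$-integral, so the two contributions cancel and $\psi_\pm=0$; in every other case $\phi_\pm=0$, and $\psi_\pm$ reduces to the $\pi/2$-integral computed above, yielding $2E(\zeta'^2-\zeta''^2)$ for $\psi_+$ and $2\mathrm{i}^{p-q}E(\zeta''^2-\zeta'^2)$ for $\psi_-$.

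The eigenequations are a chain-rule corollary of the Bessel-type identity $4wE''(w)+2dE'(w)+E(w)=0$, obtained by differentiating the defining series termwise; substituting $w=\zeta'^2-\zeta''^2$ reproduces the identity of Proposition~\ref{150803234}, and the sign reversal $w\mapsto -w$ handles the case of $\phi_-,\psi_-$. The only real obstacle is bookkeeping: tracking the powers of $\mathrm{i}$, the factors of $2$ and $\pi$, and the combinatorial coefficients that are needed to fold the double series $\sum_{j,k}$ back into $E(\pm(\zeta'^2-\zeta''^2))$. The parity/cancellation mechanism and the Beta-function identities themselves are elementary.
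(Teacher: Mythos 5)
Your computation of $\phi_\pm$ and $\psi_\pm$ is correct, but it follows a genuinely different path from the paper's. Both proofs begin identically: parametrize $\sigma=\mathrm e^{\mathrm i\theta}$, cancel the $2^{d-2}$, and deduce the vanishing of $\phi_\pm$ outside the odd-odd case from the parity of $\cos^{p-1+2j}\theta\,\sin^{q-1+2k}\theta$ over a full period. They then diverge at the evaluation step. The paper avoids the double series entirely: it rewrites the $\theta$-integral together with the sphere integrals hidden inside $e'$ and $e''$ as a single integral over $S^{d-1}$, using the product decomposition $\mathrm dS(\omega)=\cos^{p-1}\theta\,\sin^{q-1}\theta\,\mathrm d\theta\,\mathrm dS(\omega')\,\mathrm dS(\omega'')$ for $\omega=(\omega'\cos\theta,\omega''\sin\theta)$, identifies $\phi_+(\zeta)=\tfrac{2\mathrm i^{q-1}}{\pi}e(\zeta',\mathrm i\zeta'')$, and then invokes Proposition~\ref{150803234} once in dimension $d$. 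You instead apply Proposition~\ref{150803234} twice (in dimensions $p$ and $q$), insert the resulting power series for $e'$ and $e''$, integrate term-by-term with the Beta function, and refold the double sum into $E(\zeta'^2-\zeta''^2)$ via the binomial theorem. Your route is more computational and requires the extra refolding, but it is arguably more self-contained since it does not appeal to the product form of the spherical measure; the paper's route is shorter and makes the eigenequation transparent because $\phi_+$ is exhibited directly as a value of the $d$-dimensional $e$-function.

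One point deserves a flag. Your chain-rule argument from $4wE''(w)+2dE'(w)+E(w)=0$ is exactly right for $\phi_+,\psi_+$: with $w=\zeta'^2-\zeta''^2$ one gets $\square_\zeta E(w)=4wE''(w)+2dE'(w)=-E(w)$, hence $-\square\phi_+=\phi_+$. But if you carry the same computation through for $\phi_-$, which is proportional to $E(\zeta''^2-\zeta'^2)$, the sign reversal $w\mapsto -w$ \emph{flips the eigenvalue}: $\square_\zeta E(-w)=E(-w)$, so $-\square\phi_-=-\phi_-$, and similarly $-\square\psi_-=-\psi_-$. This is actually consistent with the discrete analogue in Proposition~\ref{16052215b}, where the right-hand side reads $(\pm\tau^2+4q)\phi_\pm$. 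Your phrase ``the sign reversal $w\mapsto -w$ handles the case of $\phi_-,\psi_-$'' glosses over this; if you push the chain rule through you will see the eigenvalue changes sign, which means the displayed equation in the proposition (and \eqref{17021308}) needs a $\pm$ on the right-hand side. Since only $\phi_+,\psi_+$ appear in the branching parts downstream, this does not affect the theorem, but the statement as written is not what your argument (correctly carried out) actually proves for the minus sign.
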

\begin{proof}
We first compute $\phi_+(\zeta)$.
Let $\sigma=\mathrm e^{\mathrm i\theta}$, $\theta\in(-\pi,\pi]$, in \eqref{16051123}.
Then, noting \eqref{11.12.12.22.3}, we have 
\begin{align*}
\phi_+(\zeta)
&
=\frac{\mathrm i^{q-1}}{2\pi}\int_{-\pi}^{\pi}
\cos^{p-1}\theta
\sin^{q-1}\theta
e'(\zeta'\cos\theta)
e''(\mathrm i\zeta''\sin\theta)
\,\mathrm d\theta
.
\end{align*}
By symmetry of the integrand it easily follows 
that 
\begin{align}
\begin{split}
\phi_+(\zeta)
&
=\frac{2\mathrm i^{q-1}}{\pi}\int_0^{\pi/2}
\cos^{p-1}\theta
\sin^{q-1}\theta
e'(\zeta'\cos\theta)
e''(\mathrm i\zeta''\sin\theta)
\,\mathrm d\theta
\\&
=
\frac{2\mathrm i^{q-1}}{\pi(2\pi)^d}
\int_{[0,\pi/2]\times S^{p-1}\times S^{q-1}}
\exp\bigl(\mathrm i\zeta'\omega'\cos\theta+\mathrm i\zeta''\omega''\sin\theta\bigr)
\\&\phantom{{}=\frac{2\mathrm i^{q-1}}{\pi(2\pi)^d}\int}
\cos^{p-1}\theta
\sin^{q-1}\theta
\,\mathrm d\theta\mathrm dS(\omega')\mathrm dS(\omega'')
\end{split}
\label{170213}
\end{align}
if $(p,q)$ is odd-odd, 
and $\phi_+(\zeta)=0$ otherwise.
Hence it remains to compute \eqref{170213} for $(p,q)$ odd-odd.
In \eqref{170213} we change variables to 
$$\omega
=(\omega'\cos\theta,\omega''\sin\theta)\in S^{d-1}.$$
Then, noting that 
\begin{align*}
\mathrm dS(\omega)
=\cos^{p-1}\theta
\sin^{q-1}\theta
\,\mathrm d\theta\mathrm dS(\omega')\mathrm dS(\omega''),
\end{align*}
we can write
\begin{align}
\phi_+(\zeta)
=\frac{2\mathrm i^{q-1}}{\pi}
e(\zeta',\mathrm i\zeta'')
=-\frac{4\mathrm i}{\pi}
E(\zeta'^2-\zeta''^2)
.
\label{1702130842}
\end{align}
The identity \eqref{17021308} for $\phi_+(\zeta)$ is clear from the middle expression from
\eqref{1702130842}.
Hence we are done with $\phi_+(\zeta)$

As for $\phi_-(\zeta)$, we let 
$\sigma=\mathrm e^{\mathrm i\theta}$, $\theta\in(-\pi,\pi]$, in \eqref{16051123},
and write 
\begin{align*}
\phi_-(\zeta)
&
=\frac{\mathrm i^{p-1}}{2\pi}\int_{-\pi}^{\pi}
\sin^{p-1}\theta
\cos^{q-1}\theta
e'(\mathrm i\zeta'\sin\theta)
e''(\zeta''\cos\theta)
\,\mathrm d\theta.
\end{align*}
Then we can proceed as in the above argument. We omit the details.

Next, let us compute $\psi_+(\zeta)$.
By letting 
$\sigma=\mathrm e^{\mathrm i\theta}$, $\theta\in[0,\pi/2]$, 
in \eqref{16051123}
we can write 
\begin{align*}
\psi_+(\zeta)
&=
\mathrm i^q\int_0^{\pi/2}
\cos^{p-1}\theta
\sin^{q-1}\theta
e'(\zeta'\cos\theta)
e''(\mathrm i\zeta''\sin\theta)
\,\mathrm d\theta
-\frac{\mathrm i\pi}2\phi_+(\zeta).
\end{align*}
Then, noting \eqref{170213} for $(p,q)$ odd-odd, 
we obtain 
$\psi_+(\zeta)=0$
if $(p,q)$ is odd-odd, and 
\begin{align*}
\psi_+(\zeta)
&=
\mathrm i^q\int_0^{\pi/2}
\cos^{p-1}\theta
\sin^{q-1}\theta
e'(\zeta'\cos\theta)
e''(\mathrm i\zeta''\sin\theta)
\,\mathrm d\theta
\end{align*}
otherwise. Now we can argue similarly to $\phi_+(\zeta)$,
and are done with $\psi_+(\zeta)$.

We can discuss $\psi_-(\zeta)$ similarly, and omit the details.
We are done.
\end{proof}

\subsubsection{Odd-even or even-odd signature}\label{17021319}

Here we prove Theorem~\ref{150803236} for $(p,q)$ odd-even or even-odd.
In this case the functions 
$g_\pm(\tau,x)$ from \eqref{1605082}
are independent of choice of contours of integrations,
since residues of the integrands at $\sigma=0$ are $0$ by Proposition~\ref{160522193}.
The functions $\psi_\pm(\zeta)$ from \eqref{16051123} appear naturally 
as the odd parts of $g_{\pm,\gamma}(\tau,x)$. 
We shall make use of such odd and even decompositions to 
implement the integrations in \eqref{11.12.12.15.4}.

\begin{lemma}\label{1605123}
Let $(p,q)$ be odd-even or even-odd, and $\gamma>0$.
Then there exists an entire function $h_\gamma(w,x)$
in $(w,x)\in \mathbb C\times \mathbb C^d$
such that 
the functions $g_\pm(\tau,x)$ defined 
by \eqref{1605082} for $(\tau,x)\in(\mathbb C\setminus\{0\})\times \mathbb C^d$ 
have the expressions
\begin{align}
g_{\pm,\gamma}(\tau,x)&
=\tfrac12
\bigl[
\tau^{d-2}\psi_\pm(\tau x)+(\mathrm i\tau)^{d-2}\psi_\mp(\mathrm i\tau x)
\bigr]
+h_{\gamma}(\pm\tau^2,x)
.\label{16050815}
\end{align}
In particular, $g_{\pm,\gamma}(\tau,x)$ extend to be entire 
in $(\tau,x)\in \mathbb C\times \mathbb C^d$.
\end{lemma}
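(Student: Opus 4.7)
The plan is to deform the real segment $[1,\gamma/\tau]$ in the definition \eqref{1605082} into the positive imaginary axis through the first quadrant of the $\sigma$-annulus. Since $(p,q)$ is odd-even or even-odd, Proposition~\ref{160522193} gives $\phi_\pm\equiv 0$, so $f_\pm(\sigma,\zeta)/\sigma$ is single-valued and analytic on $\mathbb{C}\setminus\{0\}$. The pivotal algebraic fact is the symmetry
\[
f_\pm(i\sigma,\zeta)=i^{d-2}f_\mp(\sigma,i\zeta),
\]
which follows from $(i\sigma)\pm(i\sigma)^{-1}=i(\sigma\mp\sigma^{-1})$ and the evenness of $e',e''$ in their arguments. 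Combined with the $\sigma$-parity $f_\pm(-\sigma,\zeta)=(-1)^df_\pm(\sigma,\zeta)$, these supply all the identities we need.

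First I would apply Cauchy's theorem on the quarter annulus with boundary $[1,\gamma/\tau]\cup\Gamma(\gamma/\tau)\cup[i\gamma/\tau,i]\cup\Gamma(1)^{\mathrm{rev}}$, obtaining
\[
\int_1^{\gamma/\tau}\frac{f_\pm(t,\tau x)}{t}\,dt=\psi_\pm(\tau x)+\int_1^{\gamma/\tau}\frac{f_\pm(it,\tau x)}{t}\,dt-i\int_0^{\pi/2}f_\pm((\gamma/\tau)e^{i\theta},\tau x)\,d\theta.
\]
Converting $f_\pm(it,\tau x)=i^{d-2}f_\mp(t,i\tau x)$ and multiplying by $\tau^{d-2}$ turns the middle integral into $(i\tau)^{d-2}\int_1^{\gamma/\tau}f_\mp(s,i\tau x)/s\,ds$. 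Applying the identical deformation to this new integral, and invoking $f_\mp(it,i\tau x)=i^{d-2}f_\pm(t,-\tau x)=i^{d-2}f_\pm(t,\tau x)$ (the last equality by $\zeta$-evenness), brings back the original $g_{\pm,\gamma}(\tau,x)$ with prefactor $i^{2(d-2)}=(-1)^{d-2}=-1$ (here we crucially use that $d$ is odd). The two deformations therefore combine into
\[
2g_{\pm,\gamma}(\tau,x)=\tau^{d-2}\psi_\pm(\tau x)+(i\tau)^{d-2}\psi_\mp(i\tau x)+R_{\pm,\gamma}(\tau,x),
\]
where $R_{\pm,\gamma}(\tau,x)$ collects the two arc integrals at radius $\gamma/\tau$.

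To verify that $R_{\pm,\gamma}(\tau,x)$ is entire in $\tau^2$, I would substitute $\sigma=(\gamma/\tau)\eta$ with $\eta\in\Gamma(1)$ in each arc integral: the expression $(\sigma\pm\sigma^{-1})\tau$ collapses to $\gamma\eta\pm\tau^2/(\gamma\eta)$, and the prefactor $(\sigma+\sigma^{-1})^{p-1}(\sigma-\sigma^{-1})^{q-1}$ carries a compensating $\tau^{2-d}$, so that $\tau^{d-2}f_\pm((\gamma/\tau)\eta,\tau x)$ depends only on $\tau^2$ and $(\eta,x)$. Integrating over the fixed compact arc $\Gamma(1)$ therefore yields $R_{\pm,\gamma}(\tau,x)=\widetilde R_{\pm,\gamma}(\tau^2,x)$ with $\widetilde R_{\pm,\gamma}$ entire in $(\tau^2,x)$.

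The main obstacle is the uniqueness claim: the lemma asserts a single entire $h_\gamma(w,x)$ with $\tfrac12\widetilde R_{\pm,\gamma}(w,x)=h_\gamma(\pm w,x)$. I would prove this by computing $\widetilde R_{+,\gamma}(-\tilde\tau^2,x)=R_{+,\gamma}(i\tilde\tau,x)$, using the identity $\gamma/(i\tilde\tau)=-i\gamma/\tilde\tau$ and the reduction $f_\pm(-i\rho,\cdot)=-i^{d-2}f_\mp(\rho,i\,\cdot)$ (a composition of $\sigma$-oddness and the key symmetry) to recognize the result as precisely $R_{-,\gamma}(\tilde\tau,x)$. Hence $\widetilde R_{+,\gamma}(-w,x)=\widetilde R_{-,\gamma}(w,x)$ and one may set $h_\gamma(w,x):=\tfrac12\widetilde R_{+,\gamma}(w,x)$. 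Finally, since $d\ge 3$ is odd, the exponent $d-2$ is a positive integer, so $\tau^{d-2}\psi_\pm(\tau x)$ and $(i\tau)^{d-2}\psi_\mp(i\tau x)$ are entire in $\tau$, and $g_{\pm,\gamma}(\tau,x)$ consequently extends entire to $\mathbb C\times\mathbb C^d$.
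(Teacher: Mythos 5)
Your proof is correct, and reaches the same decomposition as the paper, but organizes the computation a bit differently. The paper directly computes the odd and even parts $g_{\pm,\gamma}(\tau,x)\mp g_{\pm,\gamma}(-\tau,x)$ using the $\sigma$-oddness $f_\pm(-\sigma,\zeta)=-f_\pm(\sigma,\zeta)$, identifies the odd part with $\tau^{d-2}\psi_\pm(\tau x)+(\mathrm i\tau)^{d-2}\psi_\mp(\mathrm i\tau x)$ by splitting $\int_1^{-1}$ at $\mathrm i$, and writes the even part as $\int_{-\gamma}^{\gamma}\tau^{d-2}f_\pm(\sigma/\tau,\tau x)\sigma^{-1}\,\mathrm d\sigma$, which is manifestly entire in $\tau^2$. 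You instead run the quarter-annulus contour deformation twice, use $\mathrm i^{2(d-2)}=-1$ (for $d$ odd) to obtain a self-referential identity $2g_{\pm,\gamma}=\text{(branching)}+R_{\pm,\gamma}$, and solve for $g_{\pm,\gamma}$; your arc remainder $R_{\pm,\gamma}$ necessarily equals the paper's even part and is rendered a function of $\tau^2$ by the substitution $\sigma=(\gamma/\tau)\eta$, which is the same change of variables in disguise. The essential ingredients --- $f_\pm(-\sigma,\zeta)=-f_\pm(\sigma,\zeta)$, $f_\pm(\sigma,-\zeta)=f_\pm(\sigma,\zeta)$, $f_\pm(\mathrm i\sigma,\zeta)=\mathrm i^{d-2}f_\mp(\sigma,\mathrm i\zeta)$, $\phi_\pm\equiv0$ so contours are free, and the check that the $-$ remainder is the $+$ remainder evaluated at $-\tau^2$ --- all coincide. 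One small remark: your deformation picture is drawn for real $\tau\in(0,\gamma)$, so that $\gamma/\tau$ lies on the positive real axis; for general $\tau$ the resulting identity between analytic functions holds by continuation, a point you (and the paper) leave implicit but which is harmless since the integrals are path-independent once $\phi_\pm=0$.
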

\begin{proof}
We first note that, since $f_\pm(\sigma,\zeta)/\sigma$ have residue $0$ at $\sigma=0$,
the integrals in this proof do not depend on contours of integrations.
Using the identities
\begin{align}
f_\pm(-\sigma,\zeta)=-f_\pm(\sigma,\zeta),\quad 
f_\pm(\sigma,-\zeta)=f_\pm(\sigma,\zeta),
\label{1702131630}
\end{align}
we can verify that for any $(\tau,x)\in (\mathbb C\setminus\{0\})\times \mathbb C^d$
\begin{align}
g_{\pm,\gamma}(\tau,x)
-g_{\pm,\gamma}(-\tau,x)
&=\tau^{d-2}
\int_1^{-1}
\frac{ f_\pm(\sigma,\tau x)}{\sigma}\,\mathrm d\sigma,
\label{17021316}
\\
g_{\pm,\gamma}(\tau,x)
+g_{\pm,\gamma}(-\tau,x)
&=
\tau^{d-2}\int_{-\gamma/\tau}^{\gamma/\tau}
\frac{ f_\pm(\sigma,\tau x)}{\sigma}\,\mathrm d\sigma
.
\label{1702131617}
\end{align}
Moreover, if we use
\begin{align}
f_\pm(\mathrm i\sigma,\zeta)=\mathrm i^{d-2}f_\mp(\sigma,\mathrm i\zeta),
\label{1702131631}
\end{align}
we can rewrite \eqref{17021316} as 
\begin{align*}
g_{\pm,\gamma}(\tau,x)
-g_{\pm,\gamma}(-\tau,x)
=\tau^{d-2}\psi_\pm(\tau x)+(\mathrm i\tau)^{d-2}\psi_\mp(\mathrm i\tau x).
\end{align*}
On the other hand, let us rewrite \eqref{1702131617} by change of variables as 
\begin{align}
g_{\pm,\gamma}(\tau,x)+g_{\pm,\gamma}(-\tau,x)
&=\int_{-\gamma}^{\gamma}
\frac{\tau^{d-2}f_\pm(\sigma/\tau,\tau x)}{\sigma}\,\mathrm d\sigma.
\label{160511233}
\end{align}
It is not difficult to see that the right-hand side of \eqref{160511233}
extends to be entire, 
since the integrands combined are analytic in 
$(\sigma,\tau,x)\in (\mathbb C\setminus\{0\})\times\mathbb C\times \mathbb C^d$:
Apparent singularities at $\tau=0$ are in fact removable 
by noting the expression \eqref{11.12.12.22.3}.
Since \eqref{160511233} is even in $\tau\in\mathbb C$, we may set 
$$h_\gamma(\tau^2,x)=\tfrac12\int_{-\gamma}^{\gamma}
\frac{\tau^{d-2}f_+(\sigma/\tau,\tau x)}{\sigma}\,\mathrm d\sigma.$$
If we use \eqref{1702131630} and \eqref{1702131631} again, 
we can verify 
$$h_\gamma(-\tau^2,x)=\tfrac12\int_{-\gamma}^{\gamma}
\frac{\tau^{d-2}f_-(\sigma/\tau,\tau x)}{\sigma}\,\mathrm d\sigma.$$
Hence we are done.
\end{proof}

We now prove Theorem~\ref{150803236} for $(p,q)$ odd-even or even-odd.

\begin{proof}[Proof of Theorem~\ref{150803236} for $(p,q)$ odd-even or even-odd]
In this proof for notational simplicity we set 
\begin{align*}
z=\kappa^2,\quad 
\kappa\in\mathbb C_{++}=\{w\in \mathbb C;\ \mathop{\mathrm{Re}}w>0,\ \mathop{\mathrm{Im}}w>0\}.
\end{align*}
Let us substitute the odd and even decompositions \eqref{16050815}
into \eqref{11.12.12.15.4}:
\begin{align}
\begin{split}
k_\gamma(z,x)
&
=
\tfrac12
\int_0^\gamma\frac{\tau^{d-1}\psi_+(\tau x)}{\tau^2-z}\,\mathrm d\tau
+\tfrac12\int_0^\gamma\frac{\tau(\mathrm i\tau)^{d-2}\psi_-(\mathrm i\tau x)}{\tau^2-z}\,\mathrm d\tau
\\&\phantom{{}={}}
-\tfrac12\int_0^\gamma\frac{\tau^{d-1}\psi_-(\tau x)}{\tau^2+z}\,\mathrm d\tau
-\tfrac12\int_0^\gamma\frac{\tau(\mathrm i\tau)^{d-2}\psi_+(\mathrm i\tau x)}{\tau^2+z}\,\mathrm d\tau
\\&\phantom{{}={}}
+\int_0^\gamma\frac{\tau h_{\gamma}(\tau^2,x)}{\tau^2-z}\,\mathrm d\tau
-\int_0^\gamma\frac{\tau h_{\gamma}(-\tau^2,x)}{\tau^2+z}\,\mathrm d\tau
\\
&=:I_1+I_2-I_3-I_4+I_5-I_6,
\end{split}
\label{1605081522}
\end{align}
where all the contours of integrations are set on the real axis.
For the first integral $I_1$ in the brackets of \eqref{1605081522} 
we note that the integrand is even in $\tau$,
so that the interval of integration can be symmetrized.
Then, using Cauchy's integral formula, we have 
\begin{align*}
\begin{split}
I_1
&=
\tfrac14\int_{-\gamma}^\gamma\frac{\tau^{d-1}\psi_+(\tau x)}{\tau^2-z}\,\mathrm d\tau
=\frac{\mathrm i\pi}4 \kappa^{d-2}\psi_+(\kappa x)-
\tfrac14\int_{\widetilde\Gamma(\gamma)}\frac{\tau^{d-1}\psi_+(\tau x)}{\tau^2-z}\,\mathrm d\tau.
\end{split}
\end{align*}
Similarly for the second to fourth integrals from \eqref{1605081522},
\begin{align*}
\begin{split}
I_2
&
=\frac{\mathrm i\pi}4 (\mathrm i\kappa)^{d-2}\psi_-(\mathrm i\kappa x)
-\tfrac14\int_{\widetilde\Gamma(\gamma)}\frac{\tau(\mathrm i\tau)^{d-2}\psi_-(\mathrm i\tau x)}{\tau^2-z}\,\mathrm d\tau,
\end{split}
\\
\begin{split}
I_3
&
=\frac{\mathrm i\pi}4 (\mathrm i\kappa)^{d-2}\psi_-(\mathrm i\kappa x)
-\tfrac14\int_{\widetilde\Gamma(\gamma)}\frac{\tau^{d-1}\psi_-(\tau x)}{\tau^2+z}\,\mathrm d\tau,
\end{split}
\\
\begin{split}
I_4&
=\frac{\mathrm i\pi}4 (-\kappa)^{d-2}\psi_+(-\kappa x)
-\tfrac14\int_{\widetilde\Gamma(\gamma)}\frac{\tau(\mathrm i\tau)^{d-2}\psi_+(\mathrm i\tau x)}{\tau^2+z}\,\mathrm d\tau,
\end{split}
\end{align*}
so that by summing up and changing variables
\begin{align*}
\begin{split}
&I_1+I_2-I_3-I_4
\\&
=\frac{\mathrm i\pi}2\kappa^{d-2}\psi_+(\kappa x)
-\tfrac14\int_{\widetilde\Gamma(\gamma)}\frac{\tau^{d-1}\psi_+(\tau x)}{\tau^2-z}\,\mathrm d\tau
-\tfrac14\int_{\mathrm i\widetilde\Gamma(\gamma)}
\frac{\tau^{d-1}\psi_-(\tau x)}{\tau^2+z}\,\mathrm d\tau
\\&\phantom{{}={}}
+\tfrac14\int_{\widetilde\Gamma(\gamma)}\frac{\tau^{d-1}\psi_-(\tau x)}{\tau^2+z}\,\mathrm d\tau
+\tfrac14\int_{\mathrm i\widetilde\Gamma(\gamma)}
\frac{\tau^{d-1}\psi_+(\tau x)}{\tau^2-z}\,\mathrm d\tau
\\&
=\frac{\mathrm i\pi}2\kappa^{d-2}\psi_+(\kappa x)
-\tfrac12\int_{\Gamma(\gamma)}\frac{\tau^{d-1}\psi_+(\tau x)}{\tau^2-z}\,\mathrm d\tau
+\tfrac12\int_{\Gamma(\gamma)}\frac{\tau^{d-1}\psi_-(\tau x)}{\tau^2+z}\,\mathrm d\tau
,
\end{split}
\end{align*}
where $\mathrm i\widetilde\Gamma(\gamma)=\{\mathrm iw;\ w\in\widetilde\Gamma(\gamma)\}$.
For the fifth and sixth integrals from \eqref{1605081522} 
we change the variables to $\tau^2=\lambda$ 
and $\tau^2=-\lambda$, respectively, 
and combine them as 
\begin{align*}
\begin{split}
I_5-I_6&
=
\tfrac12
\int_0^{\gamma^2}\frac{h_{\gamma}(\lambda,x)}{\lambda-z}\,\mathrm d\lambda
+\tfrac12
\int_0^{-\gamma^2}\frac{h_{\gamma}(\lambda,x)}{-\lambda+z}\,\mathrm d\lambda
\\&
=
\tfrac12\int_{\mathrm i^2\widetilde\Gamma(\gamma^2)}\frac{h_{\gamma}(\lambda,x)}{\lambda-z}\,\mathrm d\lambda,
\end{split}
\end{align*}
where $\mathrm i^2\widetilde\Gamma(\gamma^2)
=\{\mathrm i^2w;\ w\in\widetilde\Gamma(\gamma^2)\}$.
Hence, 
if we set 
\begin{align*}
\begin{split}
\chi_\gamma(z,x)
&=-\tfrac12\int_{\Gamma(\gamma)}\frac{\tau^{d-1}\psi_+(\tau x)}{\tau^2-z}\,\mathrm d\tau
+\tfrac12\int_{\Gamma(\gamma)}\frac{\tau^{d-1}\psi_-(\tau x)}{\tau^2+z}\,\mathrm d\tau
\\&\phantom{{}={}}
+\tfrac12\int_{\mathrm i^2\widetilde\Gamma(\gamma^2)}
\frac{h_{\gamma}(\lambda,x)}{\lambda-z}\,\mathrm d\lambda,
\end{split}
%\label{16051214}
\end{align*}
which is obviously analytic in $(z,x)\in \Delta(\gamma^2)\times\mathbb C^d$, 
then we obtain
\begin{align*}
k_\gamma(z,x)
=
\frac{\mathrm i\pi}2\kappa^{d-2}\psi_+(\kappa x)
+\chi_\gamma(z,x)
\end{align*}
This implies the assertion.
\end{proof}

\subsubsection{Even-even signature}\label{1605149}

Next, we prove Theorem~\ref{150803236} for $(p,q)$ even-even.
In this case, as in Section~\ref{17021319}, by Proposition~\ref{160522193}
the functions $g_\pm(\tau,x)$
extend analytically as single-valued functions.
However, as stated in Lemma~\ref{1605222022} below, 
here we decompose $g_\pm(\tau,x)$ depending on
symmetry under a quarter-rotation, not under half-rotation like in Section~\ref{17021319}.

\begin{lemma}\label{1605222022}
Let $(p,q)$ be even-even, and $\gamma>0$.
Then there exists an entire function $h_\gamma(w,x)$ in $(w,x)\in \mathbb C\times\mathbb C^d$
such that the functions $g_{\pm,\gamma}(\tau,x)$ defined by 
\eqref{1605082} for $(\tau,x)\in(\mathbb C\setminus \{0\})\times\mathbb C^d$
have the expressions
\begin{align}
g_{\pm,\gamma}(\tau,x)
&=
\tfrac12\tau^{d-2}\psi_\pm(\tau x)+ h_{\gamma}(\pm\tau^2,x).
\label{16051223}
\end{align}
In particular, $g_{\pm,\gamma}(\tau,x)$ extend to be entire in $(\tau,x)\in\mathbb C\times\mathbb C^d$.
\end{lemma}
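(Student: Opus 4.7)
The plan is to mirror the strategy of Lemma~\ref{1605123}, replacing the half-rotation $\tau\mapsto-\tau$ used there by the quarter-rotation $\tau\mapsto\mathrm i\tau$; in view of \eqref{1702131631} this rotation exchanges $f_+$ with $f_-$, and so will naturally mix $g_{+,\gamma}$ with $g_{-,\gamma}$.

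I would first establish the quarter-rotation identity
\[
g_{-,\gamma}(\mathrm i\tau,x)=g_{+,\gamma}(\tau,x)-\tau^{d-2}\psi_+(\tau x),
\]
together with its $\pm$-swapped counterpart. Since $\phi_\pm\equiv 0$ for even-even $(p,q)$ by Proposition~\ref{160522193}, the integrand $f_+(\sigma,\tau x)/\sigma$ in \eqref{1605082} has vanishing residue at $\sigma=0$, so its contour integral in $\mathbb C\setminus\{0\}$ is path-independent. Splitting the contour from $1$ to $\gamma/\tau$ at $\sigma=\mathrm i$, the first segment along $\Gamma(1)$ contributes $\tau^{d-2}\psi_+(\tau x)$ via \eqref{16051123}, and the second segment from $\mathrm i$ to $\gamma/\tau$ is transformed by $\sigma=\mathrm i\tilde\sigma$ and \eqref{1702131631} into $g_{-,\gamma}(\mathrm i\tau,x)$, with the factors $\mathrm i^{d-2}$ conspiring to cancel the $(\mathrm i\tau)^{d-2}$ prefactor.

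Next I would define $h_\gamma(\tau^2,x):=g_{+,\gamma}(\tau,x)-\tfrac12\tau^{d-2}\psi_+(\tau x)$ and verify it extends to an entire function of $(w,x)\in\mathbb C\times\mathbb C^d$. The substitution $s=\tau\sigma$ recasts
\[
g_{+,\gamma}(\tau,x)=\int_\tau^\gamma \frac{F(s,\tau^2,x)}{s}\,\mathrm ds,\qquad F(s,w,x)=\frac{(s^2+w)^{p-1}(s^2-w)^{q-1}}{s^{d-2}\,2^{d-2}}\,e'\bigl(\tfrac{s^2+w}{2s}x'\bigr)e''\bigl(\tfrac{s^2-w}{2s}x''\bigr),
\]
and since $F$ is manifestly even in $s$ the kernel $F/s$ is odd in $s$, so a symmetric contour argument gives $g_{+,\gamma}(-\tau,x)=g_{+,\gamma}(\tau,x)$; together with $\psi_+(\tau x)=2E(\tau^2(x'^2-x''^2))$ from Proposition~\ref{160522193}, this shows $h_\gamma$ depends on $\tau$ only through $w=\tau^2$. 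At $\tau=0$ the factor $(s^2-\tau^2)^{q-1}$ (with $q\ge2$) softens the lower endpoint and $F(s,0,x)/s\sim s^{d-3}$ is integrable at $s=0$ (using $d\ge 4$ for even-even), so $g_{+,\gamma}(\tau,x)$ stays bounded as $\tau\to 0$ and Riemann's removable-singularity theorem provides the analytic extension.

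Finally I would verify that this same $h_\gamma$ also yields the $-$ case. Writing
\[
h_\gamma(-\tau^2,x)=h_\gamma((\mathrm i\tau)^2,x)=g_{+,\gamma}(\mathrm i\tau,x)-\tfrac12(\mathrm i\tau)^{d-2}\psi_+(\mathrm i\tau x)
\]
and substituting the $\pm$-swapped identity $g_{+,\gamma}(\mathrm i\tau,x)=g_{-,\gamma}(\tau,x)-\tau^{d-2}\psi_-(\tau x)$, the desired equality $g_{-,\gamma}(\tau,x)=\tfrac12\tau^{d-2}\psi_-(\tau x)+h_\gamma(-\tau^2,x)$ reduces via Proposition~\ref{160522193} to the arithmetic identity $\mathrm i^{p-q}+\mathrm i^{d-2}=0$ for even-even $(p,q)$, which follows from $\mathrm i^{p-q}+\mathrm i^{p+q-2}=\mathrm i^{p-q}\bigl(1+(-1)^{q-1}\bigr)=0$ when $q$ is even. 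The main obstacle I anticipate is the analyticity of $h_\gamma$ at $w=0$: the removable-singularity argument depends crucially on the endpoint softening by $(s^2-\tau^2)^{q-1}$ and the integrability of $s^{d-3}$ at $s=0$, both of which require the even-even hypothesis (equivalently $d\ge 4$ and $q\ge 2$).
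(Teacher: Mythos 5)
Your proposal is correct and follows essentially the same route as the paper: both define $h_\gamma(\tau^2,x)=g_{+,\gamma}(\tau,x)-\tfrac12\tau^{d-2}\psi_+(\tau x)$, show this is even in $\tau$ and extends analytically through $\tau=0$ by inspecting the integrand after the substitution $s=\tau\sigma$, and then use the quarter-rotation $\tau\mapsto\mathrm i\tau$ together with \eqref{1702131631} and the evenness of $f_\pm$ in both arguments (for $d$ even) to verify $h_\gamma(-\tau^2,x)=g_{-,\gamma}(\tau,x)-\tfrac12\tau^{d-2}\psi_-(\tau x)$. Your unpacking of the final check to the identity $\mathrm i^{p-q}+\mathrm i^{d-2}=0$ and your remark that even-even forces $q\ge 2$, $d\ge 4$ (so that $s^{d-3}$ is integrable at the lower endpoint) are accurate elaborations of what the paper leaves implicit.
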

\begin{proof}
Let us change variables in \eqref{1605082}, and decompose
\begin{align}
g_{\pm,\gamma}(\tau,x)
=
\tfrac12\tau^{d-2}\psi_\pm(\tau x)
+
\tau^{d-2}\biggl[
\int_\tau^\gamma\frac{f_\pm(\sigma/\tau,\tau x)}{\sigma}
\,\mathrm d\sigma
-\tfrac12\psi_\pm(\tau x)
\biggr].
\label{16051423b}
\end{align}
As in the proof of Lemma~\ref{1605123}, 
the apparent singularities at $\tau=0$ of the latter terms on the right-hand side of 
\eqref{16051423b}
are in fact removable by noting the expressions \eqref{11.12.12.22.3},
and they are entire in $(\tau,x)\in \mathbb C\times \mathbb C^d$.
Moreover, by \eqref{1702131630}, \eqref{1702131631} and Proposition~\ref{160522193} 
they are even in $\tau\in \mathbb C$.
Hence we may in particular write 
\begin{align*}
h_{\gamma}(\tau^2,x)
&=
\tau^{d-2}\biggl[
\int_\tau^\gamma\frac{f_+(\sigma/\tau,\tau x)}{\sigma}
\,\mathrm d\sigma
-\tfrac12\psi_+(\tau x)\biggr]
,
\end{align*}
where $h_\gamma(w,x)$ is 
an entire function in $(w,x)\in \mathbb C\times\mathbb C^d$.
Then it follows by \eqref{1702131630}, \eqref{1702131631}, Proposition~\ref{160522193}
and \eqref{16051123}
that 
\begin{align*}
h_{\gamma}(-\tau^2,x)
&=
(\mathrm i\tau)^{d-2}\int_{\mathrm i\tau}^{\gamma}
\frac{f_+\bigl(\sigma/(\mathrm i\tau),\mathrm i\tau x\bigr)}{\sigma}\,\mathrm d\sigma
-\tfrac12(\mathrm i\tau)^{d-2}\psi_+(\mathrm i\tau x)
\\&=
\tau^{d-2}\int_{\mathrm i\tau}^{\gamma}
\frac{f_-(\sigma/\tau,\tau x)}{\sigma}\,\mathrm d\sigma
+\tfrac12\tau^{d-2}\psi_-(\tau x)
\\&=
\tau^{d-2}\int_{\mathrm i}^{\gamma/\tau}
\frac{f_-(\sigma,\tau x)}{\sigma}\,\mathrm d\sigma
+\tfrac12\tau^{d-2}\psi_-(\tau x)
\\&=
\tau^{d-2}\int_1^{\gamma/\tau}
\frac{f_-(\sigma,\tau x)}{\sigma}\,\mathrm d\sigma
-\tfrac12\tau^{d-2}\psi_-(\tau x)
\\&=
\tau^{d-2}\biggl[
\int_\tau^\gamma\frac{f_-(\sigma/\tau,\tau x)}{\sigma}
\,\mathrm d\sigma
-\tfrac12\psi_-(\tau x)\biggr].
\end{align*}
This verifies the assertion.
\end{proof}

\begin{proof}[Proof of Theorem~\ref{150803236} for $(p,q)$ even-even]
Substitute the identities \eqref{16051223} into the expression \eqref{11.12.12.15.4},
and change variables of the integrations. 
Then, also using Proposition~\ref{160522193}, we have
\begin{align*}
\begin{split}
k_\gamma(z,x)
&
=
\tfrac12\int_0^\gamma\frac{\tau^{d-1}\psi_+(\tau x)}{\tau^2-z}\,\mathrm d\tau
+\tfrac12\int_0^\gamma\frac{\tau h_{\gamma}(\tau^2,x)}{\tau^2-z}\,\mathrm d\tau
\\&\phantom{{}={}}
-\int_0^\gamma\frac{\tau^{d-1}\psi_-(\tau x)}{\tau^2+z}\,\mathrm d\tau
-\int_0^\gamma\frac{\tau h_{\gamma}(-\tau^2,x)}{\tau^2+z}\,\mathrm d\tau.
\\&
=
\tfrac14\int_0^{\gamma^2}\frac{(\sqrt{\lambda})^{d-2}\psi_+(\sqrt{\lambda}x)}{\lambda-z}\,\mathrm d\lambda
-\tfrac14\int_{-\gamma^2}^0\frac{(\sqrt{\lambda})^{d-2}\psi_+(\sqrt{\lambda}x)}{\lambda-z}\,\mathrm d\lambda
\\&\phantom{{}={}}
+\tfrac12\int_{-\gamma^2}^{\gamma^2}\frac{h_{\gamma}(\lambda,x)}{\lambda-z}\,\mathrm d\lambda.
\end{split}
\end{align*}
Now, let us set
for $(z,x)\in \mathbb C_+\times\mathbb C^d$  
\begin{align*}
\begin{split}
\eta_\gamma(z,x)
&=
\tfrac14\int_0^{\gamma^2}\widetilde{\psi}(\tau,z,x)\,\mathrm d\tau
-\tfrac14\int_{-\gamma^2}^0\widetilde{\psi}(\tau,z,x)\,\mathrm d\tau
+\tfrac12\int_{-\gamma^2}^{\gamma^2}\frac{h_{\gamma}(\lambda,x)}{\lambda-z}\,\mathrm d\lambda
\end{split}
%\label{160515220b}
\end{align*} 
with 
\begin{align*}
\widetilde{\psi}(\tau,z,x)
&
=\frac{(\sqrt{\tau})^{d-2}\psi_+(\sqrt{\tau}x)
-(\sqrt z)^{d-2}\psi_+(\sqrt z x)}{\tau-z}.
%\label{160515221b}
\end{align*}
Then the function $\eta_\gamma(z,x)$ is obviously analytic in $(z,x)\in\Delta(\gamma^2)\times\mathbb C^d$,
and we can write 
\begin{align*}
\begin{split}
k_\gamma(z,x)
&
=
\tfrac14(\sqrt z)^{d-2}\psi_+(\sqrt zx)
\Bigl(\int_0^{\gamma^2}\frac{1}{\tau-z}\,\mathrm d\tau
-\int_{-\gamma^2}^0\frac{1}{\tau-z}\,\mathrm d\tau\Bigr)
+\eta_\gamma(z,x)
\\&
=
-\tfrac12(\sqrt z)^{d-2}\psi_+(\sqrt zx)\log z
+\chi_\gamma(z,x)
,
\end{split}
%\label{1605151}
\end{align*}
where 
$$\chi_\gamma(z,x)
=\tfrac14(\sqrt z)^{d-2}\psi_+(\sqrt zx)
\bigl[\log (z-\gamma^2)
+\log (z+\gamma^2)\bigr]
+\eta_\gamma(z,x).$$
Hence we are done.
\end{proof}

\subsubsection{Odd-odd signature}

Finally we prove Theorem~\ref{150803236} for $(p,q)$ odd-odd.
In this case the functions $g_{\pm,\gamma}(\tau,x)$ from \eqref{1605082}
are dependent on choice of contours of integrations.
Here let us always choose contours inside $\mathbb C\setminus (-\infty,0]$.

\begin{lemma}\label{1605222022b}
Let $(p,q)$ be odd-odd, and $\gamma>0$.
Then there exists an entire function $h_{\gamma}(w,x)$ 
in $(w,x)\in \mathbb C\times\mathbb C^d$ 
such that the functions $g_{\pm,\gamma}(\tau,x)$ defined by 
\eqref{1605082} for $(\tau,x)\in(\mathbb C\setminus (-\infty,0])\times\mathbb C^d$ 
have the expressions
\begin{align}
g_{\pm,\gamma}(\tau,x)
&=
\tau^{d-2}\phi_\pm(\tau x)\log(\gamma/\tau)
+ h_{\gamma}(\pm\tau^2,x).
\label{16051223b}
\end{align}
\end{lemma}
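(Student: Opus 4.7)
The plan is to isolate the logarithmic branching by an elementary algebraic manipulation and then show that the remainder is entire in $\tau^2$ and $x$. Since $(p,q)$ is odd-odd, $d$ is even, so a direct inspection of \eqref{11.12.12.22.3} gives $f_\pm(-\sigma,\zeta) = (-1)^{d-2}f_\pm(\sigma,\zeta) = f_\pm(\sigma,\zeta)$; combined with evenness of $f_\pm$ in $\zeta$, this forces $\phi_\pm(\zeta)$ to be even in $\zeta$, consistent with Proposition~\ref{160522193}. Adding and subtracting $\phi_\pm(\tau x)/\sigma$ in the integrand of \eqref{1605082} yields
\[g_{\pm,\gamma}(\tau,x) = \tau^{d-2}\phi_\pm(\tau x)\log(\gamma/\tau) + H_\pm(\tau,x),\]
where $H_\pm(\tau,x) := \tau^{d-2}\int_1^{\gamma/\tau}\bigl(f_\pm(\sigma,\tau x) - \phi_\pm(\tau x)\bigr)/\sigma\,d\sigma$. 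The first term is the asserted branching. Since $\phi_\pm(\zeta)$ is by \eqref{11.12.11.13.40} the residue of $f_\pm(\sigma,\zeta)/\sigma$ at $\sigma = 0$, the new integrand has no $\sigma^{-1}$ term in its Laurent expansion, so the integral is path-independent in $\mathbb{C}\setminus\{0\}$, and $H_\pm(\tau,x)$ is single-valued for $\tau\in\mathbb{C}\setminus(-\infty,0]$.

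To extend $H_\pm$ to an entire function of $(\tau,x)$, I would apply the substitution $\sigma \mapsto \sigma/\tau$ to get
\[H_\pm(\tau,x) = \int_\tau^\gamma \frac{\tau^{d-2}\bigl[f_\pm(\sigma/\tau,\tau x) - \phi_\pm(\tau x)\bigr]}{\sigma}\,d\sigma,\]
and then verify directly from \eqref{11.12.12.22.3} that $\tau^{d-2}f_\pm(\sigma/\tau,\tau x)$ extends to an analytic function on $(\mathbb{C}\setminus\{0\})\times\mathbb{C}\times\mathbb{C}^d$ (the factor $\tau^{-(d-2)}$ arising from $((\sigma/\tau)\pm(\tau/\sigma))^{p-1}((\sigma/\tau)\mp(\tau/\sigma))^{q-1}$ cancels the explicit $\tau^{d-2}$), while $\tau^{d-2}\phi_\pm(\tau x)$ is trivially entire in $\tau$. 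Arguing as in the proof of Lemma~\ref{1605222022}, I would deform the $\tau$-dependent contour into a $\tau$-independent fixed contour plus a short segment absorbing the motion of the lower endpoint, each contributing analytic functions of $\tau$. Evenness $H_\pm(-\tau,x) = H_\pm(\tau,x)$ then follows from the evenness of $f_\pm$ and $\phi_\pm$ and the parity of $d-2$, so $H_\pm(\tau,x) = \tilde H_\pm(\tau^2,x)$ for some entire $\tilde H_\pm$.

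The final step is to identify $\tilde H_+(w,x) = \tilde H_-(-w,x) =: h_\gamma(w,x)$, or equivalently $H_+(\tau,x) = H_-(i\tau,x)$. I would establish this by substituting $\sigma \mapsto i\sigma$ in $H_-(i\tau,x)$ and invoking \eqref{1702131631}, $f_\pm(i\sigma,\zeta) = i^{d-2}f_\mp(\sigma,i\zeta)$, together with the companion identity $\phi_-(i\zeta) = i^{d-2}\phi_+(\zeta)$, which follows from Proposition~\ref{160522193} upon noting that $q$ odd forces $i^{p-q} = i^{d-2}$. After these identities, the integrand of $H_-(i\tau,x)$ coincides with that of $H_+(\tau,x)$, but on a rotated contour that can be homotoped back to the original in $\mathbb{C}\setminus\{0\}$ without picking up a residue. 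The main obstacle is this last step: the rotated contour crosses the branch cut $(-\infty,0]$ chosen for $\log$, so it must be deformed carefully around $\sigma = 0$; the argument works precisely because the residue at $\sigma = 0$ of the integrand vanishes by construction of $\phi_\pm$.
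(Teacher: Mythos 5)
Your decomposition, the single-valuedness argument via the subtracted residue, the recognition that $\tau^{d-2}f_\pm(\sigma/\tau,\tau x)$ is analytic after the substitution $\sigma\mapsto\sigma/\tau$, and the reduction to showing $H_+(\tau,x)=H_-(\mathrm i\tau,x)$ all match the paper's strategy. The gap is in the very last step, where you claim the contour change ``works precisely because the residue at $\sigma=0$ of the integrand vanishes by construction of $\phi_\pm$.'' Vanishing of that residue only yields \emph{path-independence} of $\int(f_+(\sigma,\tau x)-\phi_+(\tau x))\,\mathrm d\sigma/\sigma$ in $\mathbb C\setminus\{0\}$; it does not force the integral along the arc from $1$ to $\mathrm i$ to vanish. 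Also, the contours for $H_-(\mathrm i\tau,x)$ and $H_+(\tau,x)$ have different \emph{initial} endpoints ($\mathrm i\tau$ vs.\ $\tau$), so they cannot literally be ``homotoped'' into one another with endpoints fixed; what you must show is that the connecting piece contributes zero. Explicitly,
\[
H_-(\mathrm i\tau,x)-H_+(\tau,x)
=\tau^{d-2}\int_{\mathrm i\tau}^{\tau}\frac{f_+(\sigma/\tau,\tau x)-\phi_+(\tau x)}{\sigma}\,\mathrm d\sigma
=-\tau^{d-2}\psi_+(\tau x),
\]
and the needed input is exactly $\psi_\pm\equiv 0$ in the odd-odd case. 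This is a separate, nontrivial cancellation established in Proposition~\ref{160522193}: it follows from the symmetry $f_\pm(-\sigma,\zeta)=f_\pm(\sigma,\zeta)$ together with $f_\pm(1/\sigma,\zeta)=(-1)^{q-1}f_\pm(\sigma,\zeta)$ (both using $p,q$ odd), which make the quarter-circle integral of $f_\pm(\sigma,\zeta)/\sigma$ equal to $\tfrac14$ of the full-circle integral, i.e.\ $\tfrac{\mathrm i\pi}{2}\phi_\pm(\zeta)$, so that the subtracted integrand integrates to zero over $\Gamma(1)$. Your evenness step, by contrast, is genuinely covered by residue-vanishing plus the oddness of $(f_\pm(\sigma,\zeta)-\phi_\pm(\zeta))/\sigma$ in $\sigma$, so no additional input is needed there; but the $H_+\leftrightarrow H_-$ identification does require invoking $\psi_\pm=0$ explicitly, and as written your argument leaves this unjustified.
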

\begin{proof}
Let us change variables in \eqref{1605082}, and decompose
\begin{align}
\begin{split}
g_{\pm,\gamma}(\tau,x)
&=
\tau^{d-2}\phi_\pm(\tau x)\log(\gamma/\tau)
\\&\phantom{{}={}}
+\tau^{d-2}
\int_\tau^\gamma\frac{f_\pm(\sigma/\tau,\tau x)-\phi_{\pm}(\tau x)}{\sigma}
\,\mathrm d\sigma
.
\end{split}
\label{170214}
\end{align}
We note that the last integrals of \eqref{170214}
are independent of choice of contours, since the residues at 
$\sigma=0$ are subtracted from the integrands.
Moreover, with the factor $\tau^{d-2}$
apparent singularities at $\tau=0$ in the last terms of 
\eqref{170214}
are in fact removable by noting the expressions \eqref{11.12.12.22.3}.
Hence the last terms of 
\eqref{170214} are entire in $(\tau,x)\in \mathbb C\times \mathbb C^d$.
Furthermore, by \eqref{1702131630}, \eqref{1702131631} and Proposition~\ref{160522193} 
they are even in $\tau\in \mathbb C$.
Thus we may write 
\begin{align}
h_{\gamma}(\tau^2,x)
&=
\tau^{d-2}
\int_\tau^\gamma\frac{f_+(\sigma/\tau,\tau x)-\phi_+(\tau x)}{\sigma}
\,\mathrm d\sigma
,
\label{16051423}
\end{align}
where $h_\gamma(w,x)$ is 
an entire function in $(w,x)\in \mathbb C\times\mathbb C^d$.
Now by \eqref{1702131630}, \eqref{1702131631}, Proposition~\ref{160522193}
and \eqref{11.12.11.13.40}
we can verify
\begin{align*}
h_{\gamma}(-\tau^2,x)
&=
(\mathrm i\tau)^{d-2}\int_{\mathrm i\tau}^{\gamma}
\frac{f_+\bigl(\sigma/(\mathrm i\tau),\mathrm i\tau x\bigr)-\phi_+(\mathrm i\tau x)}{\sigma}
\,\mathrm d\sigma
\\&=
\tau^{d-2}\int_{\mathrm i\tau}^{\gamma}
\frac{f_-(\sigma/\tau,\tau x)-\phi_-(\tau x)}{\sigma}\,\mathrm d\sigma
\\&=
\tau^{d-2}\int_{\mathrm i}^{\gamma/\tau}
\frac{f_-(\sigma,\tau x)-\phi_-(\tau x)}{\sigma}\,\mathrm d\sigma
\\&=
\tau^{d-2}\int_1^{\gamma/\tau}
\frac{f_-(\sigma,\tau x)-\phi_-(\tau x)}{\sigma}\,\mathrm d\sigma
\\&=
\tau^{d-2}
\int_\tau^\gamma\frac{f_-(\sigma/\tau,\tau x)-\phi_-(\tau x)}{\sigma}
\,\mathrm d\sigma
.
\end{align*}
This verifies the assertion.
\end{proof}

\begin{proof}[Proof of Theorem~\ref{150803236} for $(p,q)$ odd-odd]
Substitute \eqref{16051223b} into \eqref{11.12.12.15.4},
and change variables of the integrations. 
Using also Proposition~\ref{160522193}, we have
\begin{align*}
k_\gamma(z,x)
&
=
\int_0^\gamma
\frac{\tau^{d-1}\phi_+(\tau x)\log(\gamma/\tau)}{\tau^2-z}\,\mathrm d\tau
+\int_0^\gamma\frac{\tau h_{\gamma}(\tau^2,x)}{\tau^2-z}\,\mathrm d\tau
\\&\phantom{{}={}}
-\int_0^\gamma
\frac{\tau^{d-1}\phi_-(\tau x)\log(\gamma/\tau)}{\tau^2+z}\,\mathrm d\tau
-\int_0^\gamma\frac{\tau h_{\gamma}(-\tau^2,x)}{\tau^2+z}\,\mathrm d\tau.
\\&
=
\tfrac14\int_0^{\gamma^2}
\frac{(\sqrt{\lambda})^{d-2}\phi_+(\sqrt{\lambda}x)\log(\gamma^2/\lambda)}{\lambda-z}\,\mathrm d\lambda
\\&\phantom{{}={}}
+\tfrac14\int_{-\gamma^2}^0
\frac{(\sqrt{\lambda})^{d-2}\phi_+(\sqrt{\lambda}x)\log(-\gamma^2/\lambda)}{\lambda-z}\,\mathrm d\lambda
+\tfrac12\int_{-\gamma^2}^{\gamma^2}\frac{h_{\gamma}(\lambda,x)}{\lambda-z}\,\mathrm d\lambda.
\end{align*}
Set 
\begin{align*}
\begin{split}
\eta_\gamma(z,x)
&=
\tfrac14\int_0^{\gamma^2}
\widetilde{\phi}(\lambda,z,x)\log\bigl({\gamma^2}/{\lambda}\bigr)\,\mathrm d\lambda
\\&\phantom{{}={}}
{}
+\tfrac14\int_{-\gamma^2}^0
\widetilde{\phi}(\lambda,z,x)\log\bigl(-{\gamma^2}/{\lambda}\bigr)\,\mathrm d\lambda
+\tfrac12\int_{-\gamma^2}^{\gamma^2}\frac{h_{\gamma}(\lambda,x)}{\lambda-z}\,\mathrm d\lambda
\end{split}
%\label{160515220}
\end{align*} 
with 
\begin{align*}
\widetilde{\phi}(\lambda,z,x)
&
=\frac{(\sqrt{\lambda})^{d-2}\phi_+(\sqrt{\lambda}x)-(\sqrt z)^{d-2}\phi_+(\sqrt zx)}{\lambda-z}.
%\label{160515222}
\end{align*}
Obviously, $\eta_\lambda(z,x)$ is analytic in $(z,x)\in\Delta(\gamma^2)\times\mathbb C^d$.
Then we can write 
\begin{align*}
\begin{split}
k_\gamma(z,x)
&
=
\tfrac14(\sqrt z)^{d-2}\phi_+(\sqrt zx)
\Bigl(
\int_0^{\gamma^2}\frac{\log(\gamma^2/\lambda)}{\lambda-z}\,\mathrm d\lambda
+\int_{-\gamma^2}^0\frac{\log(-\gamma^2/\lambda)}{\lambda-z}\,\mathrm d\lambda\Bigr)
\\&\phantom{{}={}}
+\eta_\gamma(z,x)
.
\end{split}
%\label{b1605151}
\end{align*}
Finally let us compute the last two integrals.
We first compute for  $w\in\mathbb C_+\setminus \overline{\Delta(1)}$, and then
\begin{align*}
\int_0^1\frac{\log\lambda}{\lambda-w}\,\mathrm d\lambda
-\int_0^1\frac{\log\lambda}{\lambda+w}\,\mathrm d\lambda
&=2w\int_0^1\frac{\log\lambda}{\lambda^2-w^2}\,\mathrm d\lambda
\\&=\frac1{2w}\int_0^\infty
\frac{\sigma\mathrm e^{-\sigma/2}}{1-\mathrm e^{-\sigma}/w^2}\,\mathrm d\sigma
\\&=
\sum_{k=0}^\infty
\frac1{2w^{2k+1}}\int_0^\infty\sigma\mathrm e^{-(k+1/2)\sigma}\,\mathrm d\sigma
\\&=
\sum_{k=0}^\infty\frac2{(2k+1)^2w^{2k+1}}
\\&=
\mathop{\mathrm{Li}}\nolimits_2(1/w)
-\mathop{\mathrm{Li}}\nolimits_2(-1/w),
\end{align*}
where $\mathop{\mathrm{Li}}\nolimits_2w$ is the dilogarithm defined for 
$|w|<1$ by 
$$\mathop{\mathrm{Li}}\nolimits_{2}(w)
=\sum_{k=1}^\infty \frac{w^k}{k^2},$$
and analytically continued for $w\in \mathbb{C}\setminus [1,\infty)$.
Then by analytic continuation we have for $w\in \mathbb C_+$
\begin{align}
\int_0^1\frac{\log\lambda}{\lambda-w}\,\mathrm d\lambda
-\int_0^1\frac{\log\lambda}{\lambda+w}\,\mathrm d\lambda
=\mathop{\mathrm{Li}}\nolimits_2(1/w)
-\mathop{\mathrm{Li}}\nolimits_2(-1/w).
\label{1702149}
\end{align}
If we use the identity
\begin{equation*}
\mathop{\mathrm{Li}}\nolimits_2(1/w)
=-\mathop{\mathrm{Li}}\nolimits_2(w)-\tfrac12(\log(-w))^2-\tfrac{\pi^2}{6}
\quad \text{for }\mathop{\mathrm{Im}}w>0,
\end{equation*}
see \cite[(3.2)]{LCM}, then we can rewrite \eqref{1702149} as 
\begin{equation*}
\int_0^1\frac{\log\lambda}{\lambda-w}\,\mathrm d\lambda
-\int_0^1\frac{\log\lambda}{\lambda+w}\,\mathrm d\lambda
=\mathrm i\pi\log w+\tfrac{\pi^2}{2}
-\mathop{\mathrm{Li}}\nolimits_2(w)
+\mathop{\mathrm{Li}}\nolimits_2(-w)
.
%\label{161107}
\end{equation*}
By change of variables we obtain 
\begin{align*}
&\int_0^{\gamma^2}\frac{\log(\gamma^2/\lambda)}{\lambda-z}\,\mathrm d\lambda
+\int_{-\gamma^2}^0\frac{\log(-\gamma^2/\lambda)}{\lambda-z}\,\mathrm d\lambda
\\&
=
-\mathrm i\pi\log({z}/{\gamma^2})
-\tfrac{\pi^2}{2}
+\mathop{\mathrm{Li}}\nolimits_2(z/\gamma^2)
-\mathop{\mathrm{Li}}\nolimits_2(-z/\gamma^2)
,
%\label{161107}
\end{align*}
and hence 
\begin{align*}
k_\gamma(z,x)
&=
-\tfrac{\mathrm i\pi}4(\sqrt z)^{d-2}\phi_+(\sqrt z x)\log({z}/{\gamma^2})
+\chi_\gamma(z,x)
\end{align*}
with
\begin{align*}
\chi_\gamma(z,x)=
\tfrac14(\sqrt z)^{d-2}\phi_+(\sqrt z x)\bigl[
-\tfrac{\pi^2}{2}
+\mathop{\mathrm{Li}}\nolimits_2(z/\gamma^2)
-\mathop{\mathrm{Li}}\nolimits_2(-z/\gamma^2)
\bigr]
+\eta_\gamma(z,x)
\end{align*}
Hence we are done.
\end{proof}

\section{Proofs for the discrete Laplacian}\label{17021822}

The proofs are almost the same as Section~\ref{17021720}.
We separate the elliptic and hyperbolic thresholds.

\subsection{Thresholds at end points}

We first consider the elliptic threshold. It suffices to consider $(p,q)=(d,0)$.
Note that in this case the decomposition \eqref{16060111}
has only two terms $k_0(z,n)$ and $k_1(z,n)$.
In the integral \eqref{160531} we introduce the spherical coordinates
\begin{align*}
\xi=\rho\omega,\quad (\rho,\omega)\in [0,2]\times S^{d-1},
\end{align*}
and then we have 
\begin{align}
k_1(z,n)
&{}=\int_0^2 
\frac{\rho^{d-1}e(\rho,n)}{\rho^2-z}\,\mathrm d\rho;\label{11.12.12.14.52bb}
\end{align}
with
\begin{align}
e(\rho,n)&{}=(2\pi)^{-d}\int_{S^{d-1}}
\prod_{j=1}^d\frac{\exp \bigl(2\mathrm in_j\arcsin(\rho\omega_j/2)\bigr)}{
(1-\rho^2\omega_j^2/4)^{1/2}}
\,\mathrm dS(\omega).
\label{11.12.7.22.52bb}
\end{align}
We recall that the principal branch is being chosen for the above $(1-\rho^2\omega_j^2/4)^{1/2}$.

\begin{proposition}\label{11.9.10.10.7}
The function $e(\rho,n)$ satisfies the identities 
\begin{align}
e(\rho,n)=2E^{(1)}(\rho^2,n),
\quad
-\triangle e(\rho,n)=\rho^2e(\rho,n),
\label{1702172040}
\end{align}
where $\triangle$ denotes the discrete Laplacian with respect to $n$.
\end{proposition}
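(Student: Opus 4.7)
The second identity $-\triangle e(\rho,n)=\rho^2 e(\rho,n)$ should follow by differentiating under the integral sign. For fixed $\theta$, the plane wave $n\mapsto\mathrm e^{\mathrm in\theta}$ is an eigenfunction of $\triangle$ with eigenvalue $-\Theta(\theta)=-\sum_{j=1}^d 4\sin^2(\theta_j/2)$. In the integrand of \eqref{11.12.7.22.52bb} the phases satisfy $\theta_j=2\arcsin(\rho\omega_j/2)$, so $4\sin^2(\theta_j/2)=\rho^2\omega_j^2$, and hence $\Theta(\theta)=\rho^2|\omega|^2=\rho^2$ on $S^{d-1}$. Thus $-\triangle$ pulls through as multiplication by $\rho^2$, yielding the eigenequation.

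For the first identity, my plan is to expand both sides of $e(\rho,n)=2E^{(1)}(\rho^2,n)$ as power series in $\rho^2$ and compare. The key one-variable ingredient is
\begin{align*}
\frac{\cos\bigl(2n\arcsin(x/2)\bigr)}{\sqrt{1-x^2/4}}={}_2F_1\bigl(\tfrac12-n,\tfrac12+n;\tfrac12;x^2/4\bigr),
\end{align*}
which can be read off from de~Moivre's identity $\mathrm e^{2\mathrm in\arcsin(x/2)}=(\sqrt{1-x^2/4}+\mathrm ix/2)^{2n}$ after binomial expansion and extracting the real part, or equivalently derived from the classical series $\cos(a\arcsin y)={}_2F_1(a/2,-a/2;1/2;y^2)$ combined with Euler's transformation ${}_2F_1(a,b;c;z)=(1-z)^{c-a-b}{}_2F_1(c-a,c-b;c;z)$.

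Next, I would split each factor $\mathrm e^{2\mathrm in_j\arcsin(\rho\omega_j/2)}/\sqrt{1-\rho^2\omega_j^2/4}$ in the integrand of \eqref{11.12.7.22.52bb} into its even and odd parts in $\omega_j$ (the cosine and the $\mathrm i\sin$ pieces, respectively, each divided by the square root). Expanding the product $\prod_j$, any term containing an odd factor in some $\omega_j$ yields an odd monomial in $\omega_j$ and vanishes upon integration over $S^{d-1}$; only the product of the even parts contributes. Substituting the hypergeometric series for each even factor and using the moment formula \eqref{1606011}, together with the Pochhammer identity $(1/2)_k=(2k)!/(4^k k!)$, collapses the sphere integration into the explicit multiple series
\begin{align*}
e(\rho,n)=\frac{2}{2^d\pi^{d/2}}\sum_{\alpha\in\mathbb Z_+^d}\frac{\prod_{j=1}^d(1/2-n_j)_{\alpha_j}(1/2+n_j)_{\alpha_j}}{\Gamma(|\alpha|+d/2)\,\alpha!}\,(\rho^2/4)^{|\alpha|},
\end{align*}
which coincides term by term with $2E^{(1)}(\rho^2,n)$ in the case $(p,q)=(d,0)$ once the defining Lauricella series of $F_B^{(d)}$ is inserted.

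The main obstacle will be the clean derivation of the one-variable hypergeometric identity for the cosine factor and the subsequent bookkeeping of Pochhammer symbols, half-integer Gamma values, and powers of $\pi$ needed to line up the two series. Once those symbolic manipulations are carried out, the odd/even splitting is the decisive conceptual step that decouples the $d$-dimensional angular integration into a product of independent one-dimensional hypergeometric expansions matching $F_B^{(d)}$.
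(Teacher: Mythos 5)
Your proof is correct and follows essentially the same route as the paper's: both rely on discarding the odd parts of the integrand in $\omega_j$, expanding each surviving cosine factor as ${}_2F_1(1/2-n_j,1/2+n_j;1/2;\rho^2\omega_j^2/4)$ via the Euler transformation (the paper reaches ${}_2F_1(n_j,-n_j;1/2;\,\cdot\,)$ through the Chebyshev formula $T_n(x)=F(n,-n;1/2;(1-x)/2)$ rather than your $\cos(a\arcsin y)$ identity, but these are the same fact), and then integrating monomials over $S^{d-1}$ using \eqref{1606011}. Your plane-wave eigenfunction argument for the second identity is a clean way to make precise what the paper only calls ``direct computation.''
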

\begin{proof}
In this proof the branch of square root is the principal one
with cut along the negative real axis.
The latter identity of \eqref{1702172040} follows by directly computations
employing the expression \eqref{11.12.7.22.52bb}.
Hence, also by \eqref{170218}, it suffices to show that 
\begin{align}
e(\rho,n)
=
\frac{2}{2^d\pi^{d/2}}
\sum_{\alpha\in\mathbb Z_+^d}
\frac{\prod_{j=1}^{d}(1/2-n_j)_{\alpha_j}(1/2+n_j)_{\alpha_j}}{\Gamma(|\alpha|+d/2)}
\frac{\rho^{2|\alpha|}}{4^{|\alpha|}\alpha!},
\label{11.9.14.11.1b}
\end{align}
where $(\nu)_k:=\Gamma(\nu+k)/\Gamma(\nu)$ denotes the Pochhammer symbol, 
see~\cite[Definition~5.2(iii)]{Olver:2010:NHMF}.
The expression \eqref{11.12.7.22.52bb} is obviously analytic in $\rho\in\Delta(2)$.
Moreover, the odd parts of the integrand in $\omega_j$
do not contribute to the integral \eqref{11.12.7.22.52bb}, and we can write
\begin{align}
e(\rho,n)
&{}=(2\pi)^{-d}
\int_{S^{d-1}}
\prod_{j=1}^d
\frac{\cos \bigl(2n_j\arcsin (\rho\omega_j/2)\bigr)}{(1-\rho^2\omega_j^2/4)^{1/2}}
\,\mathrm dS(\omega).
\label{160601}
\end{align}
In order to expand \eqref{160601} in $\rho\in\Delta(2)$, 
let us utilize well-known formulas concerning the Chebyshev polynomials 
$T_n$ and the hypergeometric function $F$:
\begin{align*}
\cos (n\theta)=T_n(\cos \theta),\quad T_n(x)=F(n,-n;1/2;(1-x)/2).
\end{align*}
See \cite[Chapter 16]{Olver:2010:NHMF} for the notation and the results used.
Then 
\begin{align}
\cos \bigl(2n_j\arcsin(\rho\omega_j/2)\bigr)
&=T_{n_j}\bigl(\cos\bigl(2\arcsin(\rho\omega_j/2)\bigr)\bigr)\notag
\\&
=F(n_j,-n_j;{1}/{2};\rho^2\omega_j^2/4).
\label{11.9.14.10.43}
\end{align}
We also make use of the Euler transformation formula for the hypergeometric function:
\begin{align*}
F(a,b;c;z)
=(1-z)^{c-a-b}F(c-a,c-b;c;z),
\end{align*}
so that the factors of integrand of \eqref{160601} may be rewritten as 
\begin{align*}
\frac{\cos \bigl(2n_j\arcsin (\rho\omega_j/2)\bigr)}{(1-\rho^2\omega_j^2/4)^{1/2}}
&=
F(1/2-n_j,1/2+n_j;1/2;\rho^2\omega_j^2/4)
\\&
=
\sum_{k=0}^\infty
\frac{(1/2-n_j)_k(1/2+n_j)_k}{(1/2)_k}
\frac{\rho^{2k}\omega_j^{2k}}{4^kk!}
\\&
=
\sum_{k=0}^\infty\frac{(1/2-n_j)_k(1/2+n_j)_k}{(2k)!}\rho^{2k}\omega_j^{2k}
.
\end{align*}
Then we can proceed, using also \eqref{1606011}, 
\begin{align*}
e(\rho,n)
&
=(2\pi)^{-d} 
\sum_{\alpha\in\mathbb Z_+^d}
\rho^{2|\alpha|}
\frac{\prod_{j=1}^{d}(1/2-n_j)_{\alpha_j}(1/2+n_j)_{\alpha_j}}{(2\alpha)!}
\int_{S^{d-1}}\omega^{2\alpha}\,\mathrm dS(\omega)
\\&
=\frac{2}{2^d\pi^{d/2}}
\sum_{\alpha\in\mathbb Z_+^d}
\rho^{2|\alpha|}
\frac{\prod_{j=1}^{d}(1/2-n_j)_{\alpha_j}(1/2+n_j)_{\alpha_j}}{
4^{|\alpha|}\alpha!\Gamma(|\alpha|+d/2)}
.
\end{align*}
Hence we obtain \eqref{11.9.14.11.1b}. 
\end{proof}

\begin{proof}[Proof of Theorem~\ref{1606011129} for thresholds at end points]
The expression \eqref{11.12.12.14.52bb} is very similar to \eqref{1508011300},
so that 
we can repeat the proof of Theorem~\ref{150803236} for the elliptic case.
We omit the detail.
\end{proof}

\subsection{Embedded thresholds}

Next, we consider the embedded thresholds: $p,q\ge 1$.
In the integral \eqref{160531} let us introduce the split spherical coordinates
$$
\xi'=\rho'\omega',\ \xi''=\rho''\omega'';\quad
(\rho',\rho'',\omega',\omega'')\in [0,2]^2\times S^{p-1}\times S^{q-1},
$$
and then we have
\begin{align}
k_1(z,n)
&{}=\int_{\genfrac{}{}{0pt}{}{\rho',\rho''\ge 0,}{\rho'+\rho''<2}} 
\frac{\rho'^{p-1}\rho''^{q-1}e'(\rho',n')e''(\rho'',n'')
}{\rho'^2-\rho''^2-(z-4q)}\,\mathrm d\rho'\mathrm  d\rho'',\label{11.12.12.14.52b}
\end{align}
where, the principal branch being chosen for the square root,
\begin{align*}
e'(\rho',n')&{}=(2\pi)^{-p}\int_{S^{p-1}}
\prod_{j=1}^p\frac{\exp \bigl(2\mathrm in'_j\arcsin(\rho'\omega'_j/2)\bigr)}{
(1-\rho'^2\omega_j'^2/4)^{1/2}}
\,\mathrm dS(\omega'),
%\label{11.12.7.22.52b}
\\
e''(\rho'',n'')&{}=(2\pi)^{-q}\int_{S^{q-1}}
\prod_{j=1}^q\frac{\exp \bigl(2\mathrm in''_j\arccos(\rho''\omega''_j/2)\bigr)}{
(1-\rho''^2\omega_j''^2/4)^{1/2}}
\,\mathrm dS(\omega'')
\\&{}
=(-1)^{|n''|}(2\pi)^{-q}\int_{S^{q-1}}
\prod_{j=1}^q\frac{\exp \bigl(2\mathrm in''_j\arcsin(\rho''\omega''_j/2)\bigr)}{
(1-\rho''^2\omega_j''^2/4)^{1/2}}
\,\mathrm dS(\omega'').
%\label{11.12.7.22.53b}
\end{align*}
It is clear that the identities similar to those in Proposition~\ref{11.9.10.10.7} 
hold also for a function of the form above. 
As in Section~\ref{160514},
we split the integration region of \eqref{11.12.12.14.52b}:
\begin{align}
\begin{split}
k_1(z,n)
&{}=\int_{\genfrac{}{}{0pt}{}{\rho'\ge \rho''\ge 0,}{\rho'+\rho''<2}} 
\frac{\rho'^{p-1}\rho''^{q-1}e'(\rho',n')e''(\rho'',n'')}{\rho'^2-\rho''^2-(z-4q)}
\,\mathrm d\rho'\mathrm d\rho''
\\&\phantom{={}}
+\int_{\genfrac{}{}{0pt}{}{\rho''>\rho'\ge 0,}{\rho'+\rho''<2}} 
\frac{\rho'^{p-1}\rho''^{q-1}e'(\rho',n')e''(\rho'',n'')}{\rho'^2-\rho''^2-(z-4q)}
\,\mathrm d\rho'\mathrm d\rho'',
\end{split}
\label{160501b}
\end{align}
and change the variables by 
\begin{align*}
\rho'=\tfrac12{\tau(\sigma\pm\sigma^{-1})},
\quad
\rho''=\tfrac12{\tau(\sigma\mp\sigma^{-1})},
\end{align*}
respectively.
If we introduce the functions
\begin{align}
\begin{split}
f_{\pm}(\sigma,\tau,n)&=
\frac{(\sigma\pm\sigma^{-1})^{p-1}(\sigma\mp\sigma^{-1})^{q-1}}{2^{d-2}}
\\&\phantom{{}={}}
\cdot
e'\bigl(\tfrac12{\tau(\sigma\pm\sigma^{-1})},n'\bigr)
e''\bigl(\tfrac12{\tau(\sigma\mp\sigma^{-1})},n''\bigr)
\end{split}
\label{11.12.12.22.3c}
\end{align}
and 
\begin{align}
g_{\pm}(\tau,n)&=\tau^{d-2}\int_1^{2/\tau}
\frac{ f_\pm(\sigma,\tau,n)}{\sigma}\,\mathrm d\sigma
,
\label{1605082b}
\end{align}
then \eqref{160501b} can be rewritten as
\begin{align}
\begin{split}
k_1(z,n)
&
=\int_0^2\frac{\tau g_{+}(\tau,n)}{\tau^2-(z-4q)}\,\mathrm d\tau
-\int_0^2\frac{\tau g_{-}(\tau,n)}{\tau^2+(z-4q)}\,\mathrm d\tau.
\end{split}\label{11.12.12.15.4c}
\end{align}

Let us set 
\begin{equation*}
D=\bigl\{(\sigma,\tau)\in (\mathbb C\setminus\{0\})\times \mathbb C;\ \tau(\sigma+\sigma^{-1})\in \Delta(4),\ 
\tau(\sigma-\sigma^{-1})\in \Delta(4)\bigr\}.
\end{equation*} 
Obviously, the functions $f_{\pm}(\sigma,\tau,n)$ from \eqref{11.12.12.22.3c}
are continued as single-valued analytic function in $(\sigma,\tau)\in D$.
However, the analytic continuations of the functions $g_{\pm}(\tau,n)$ from \eqref{1605082b}
are not necessarily single-valued.
Let us set for $(\tau,n)\in \Delta(2)\times \mathbb Z^d$
\begin{align}
\phi_\pm(\tau,n)
&
=\frac1{2\pi\mathrm i}\int_{|\sigma|=1}
\frac{f_\pm(\sigma,\tau,n)}{\sigma}\,\mathrm d\sigma
,
\label{11.12.11.13.40c}
\\
\psi_\pm(\tau,n)
&=\int_{\Gamma(1)}\frac{f_\pm(\sigma,\tau,n)-\phi_\pm(\tau,n)}{\sigma}\,\mathrm d\sigma
\label{16051123c}
\end{align}
with $\Gamma(1)=\{\mathrm e^{\mathrm i\theta}\in\mathbb C;\ \theta\in[0,\pi/2]\}$.

\begin{proposition}\label{16052215b} 
The functions $\phi_{\pm}(\tau,n)$ and $\psi_{\pm}(\tau,n)$ 
are analytic in $\tau\in \Delta(2)$, 
and satisfy
$$
\phi_+(\tau,n)=\frac{4}{\mathrm i\pi}E^{(1)}(\tau^2,n)
,\quad 
\phi_-(\tau,n)=\frac{4\mathrm i^{p-q}}{\mathrm i\pi}E^{(1)}(-\tau^2,n)
,\quad 
\psi_\pm(\tau,n)=0$$
if $(p,q)$ is odd-odd, and 
$$\phi_\pm(\tau,n)=0,\quad
\psi_+(\tau,n)=2E^{(1)}(\tau^2,n)
,\quad
\psi_+(\tau,n)=2\mathrm i^{p-q}E^{(1)}(-\tau^2,n)
$$
otherwise.
In addition, they also satisfy
\begin{align}
-\triangle\phi_\pm(\tau,n)&=(\pm \tau^2+4q)\phi_\pm(\tau,n),
\label{1605161925c}
\\
-\triangle\psi_\pm(\tau,n)&=(\pm \tau^2+4q)\psi_\pm(\tau,n),
\label{1605161925b}
\end{align}
where $\triangle$ denotes the discrete Laplacian with respect to $n$.
\end{proposition}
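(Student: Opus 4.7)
The plan is to mimic the proof of Proposition~\ref{160522193} closely, with the discrete $e'$, $e''$ replacing the continuous spherical-average $e$-functions. I would separate the three claims (analyticity in $\tau$, closed form, eigenequation) and handle them in that order.

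For analyticity on $\tau\in\Delta(2)$, note that both on $|\sigma|=1$ and on $\sigma\in\Gamma(1)$ one has $|\sigma\pm\sigma^{-1}|\le 2$, so for $\tau\in\Delta(2)$ the arguments $\tfrac12\tau(\sigma\pm\sigma^{-1})$ of $e'$, $e''$ in \eqref{11.12.12.22.3c} lie in the open disc of radius $2$. Inside this disc $e'$, $e''$ are analytic by Proposition~\ref{11.9.10.10.7} and its obvious counterpart for $\arccos$ (which follows from $\arccos x=\pi/2-\arcsin x$). Hence $(\sigma,\tau)\in D$ along both contours, and differentiation under the integral sign gives the required analyticity of $\phi_\pm$ and $\psi_\pm$.

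For the eigenequations, Proposition~\ref{11.9.10.10.7} yields $-\triangle_{n'}e'(\rho',n')=\rho'^2 e'(\rho',n')$. The same argument applied to $e''$, using the $(-1)^{|n''|}$ factor in its alternative expression, yields $-\triangle_{n''}e''(\rho'',n'')=(4q-\rho''^2)e''(\rho'',n'')$; the shift $4q$ appears because conjugation by $(-1)^{|n''|}$ sends the discrete Laplacian on $\mathbb Z^q$ to $4q-(-\triangle)$. Since under $\rho'=\tfrac12\tau(\sigma\pm\sigma^{-1})$, $\rho''=\tfrac12\tau(\sigma\mp\sigma^{-1})$ one has $\rho'^2-\rho''^2=\pm\tau^2$, it follows that $-\triangle_n f_\pm(\sigma,\tau,n)=(\pm\tau^2+4q)f_\pm(\sigma,\tau,n)$. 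Commuting $\triangle$ past the contour integrals \eqref{11.12.11.13.40c} and \eqref{16051123c} gives \eqref{1605161925c} and \eqref{1605161925b}.

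For the closed-form evaluation of $\phi_\pm$ and $\psi_\pm$, parameterize $\sigma=\mathrm e^{\mathrm i\theta}$ so that $\tfrac12\tau(\sigma+\sigma^{-1})=\tau\cos\theta$ and $\tfrac12\tau(\sigma-\sigma^{-1})=\mathrm i\tau\sin\theta$. The integrand for $\phi_+$ acquires a factor $\cos^{p-1}\theta\sin^{q-1}\theta$, and a parity argument in $\theta$ immediately forces $\phi_\pm=0$ unless $(p,q)$ is odd-odd, with $\psi_\pm=0$ in the odd-odd case. In the surviving cases one performs the change of variables $\omega=(\omega'\cos\theta,\omega''\sin\theta)\in S^{d-1}$, with Jacobian $\cos^{p-1}\theta\sin^{q-1}\theta\,\mathrm d\theta\,\mathrm dS(\omega')\,\mathrm dS(\omega'')$, exactly as in the step leading to \eqref{1702130842}, reducing both $\phi_\pm$ and $\psi_\pm$ to a single integral over $S^{d-1}$.

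The main obstacle will be identifying this $S^{d-1}$-integral with $E^{(1)}(\pm\tau^2,n)$. Expanding the $n'$-factors via \eqref{11.9.14.10.43} and Euler's transformation, and the $n''$-factors using the analogous identity obtained from $\arccos x=\pi/2-\arcsin x$ (which produces an overall $(-1)^{|n''|}$), one gets Taylor series in the arguments $\tau^2\cos^2\theta\,\omega'^2_j$ and $(\mathrm i\tau\sin\theta)^2\omega''^2_j=-\tau^2\sin^2\theta\,\omega''^2_j$. The sign flips $(-1)^{\alpha''_j}$ in the $q$-slots are precisely what is encoded in $w^{(1)}=(\tau^2,\dots,\tau^2,-\tau^2,\dots,-\tau^2)$ in the definition of $E^{(1)}$. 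Integrating the resulting monomials $\omega^{2\alpha}$ by \eqref{1606011} and simplifying the Pochhammer ratios via \eqref{170218} collapses the double series into $F_B^{(d)}(1/2-n,1/2+n;d/2;w^{(1)}/4)$, with the prefactor $\tfrac{4}{\mathrm i\pi}\cdot\tfrac{(-1)^{|n''|}\mathrm i^q}{2^d\pi^{d/2}\Gamma(d/2)}$ emerging from $\tfrac{2\mathrm i^{q-1}}{\pi}\cdot(2\pi)^{-d}$ together with \eqref{1606011}. This matches the asserted formula for $\phi_+$, and $\phi_-$, $\psi_\pm$ are handled symmetrically, swapping the roles of the $p$- and $q$-slots to produce the $\mathrm i^{p-q}E^{(1)}(-\tau^2,n)$ form.
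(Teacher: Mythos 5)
Your proposal is correct and follows exactly the route the paper takes: the paper's proof of this proposition simply records the eigenequation $(-\triangle\mp\tau^2-4q)f_\pm=0$ and defers the analyticity and closed-form evaluation to the earlier Propositions~\ref{11.9.10.10.7} and \ref{160522193}, which is precisely what you carry out. Your two less-obvious ingredients -- that conjugation by $(-1)^{|n''|}$ maps $-\triangle$ on $\mathbb Z^q$ to $4q-(-\triangle)$, and that the sign flip $(i\tau\sin\theta)^2=-\tau^2\sin^2\theta$ in the $q$-slots is exactly what the vector argument $w^{(1)}$ of the Lauricella function encodes -- are the right ones, and the prefactor bookkeeping you sketch checks out.
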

\begin{proof}
The identities \eqref{1605161925c} and \eqref{1605161925b} are verified with ease,
if we note the expressions \eqref{11.12.11.13.40c}, \eqref{16051123c} and the identities
\begin{align*}
(-\triangle\mp \tau^2-4q)f_\pm(\sigma,\tau,n)=0.
\end{align*}
The rest of the assertions are verified completely the same manner as in
the proof of Proposition~\ref{160522193}. We omit the details.
\end{proof}

\begin{proof}[Proof of Theorem~\ref{1606011129} for embedded thresholds]
If we argue similarly to the proof of Theorem~\ref{150803236} for hyperbolic thresholds, 
we can deduce the expressions
\begin{align*}
k_1(w+4q,n)=\mathrm i\pi(\sqrt w)^{d-2}E^{(1)}(w,n)+\chi^{(1)}(w,n)
\end{align*}
if $d$ is odd, and 
\begin{align*}
k_1(w+4q,n)
&=
-(\sqrt w)^{d-2}(\log w)E^{(1)}(w,n)+\chi^{(1)}(w,n)
\end{align*}
if $d$ is even.
The construction of this section so far is concerned with the component $k_1(z,n)$
in \eqref{16060111},
but we can in fact do similar constructions for $k_l(z,n)$, $l=2,\dots,L$.
Then by summing up we obtain the assertion. We omit the details.
\end{proof}

\bigskip
\noindent
\subsubsection*{Acknowledgements} 
KI would like to thank Yoshiaki Goto for pointing out an error in the draft on the Lauricella hypergeometric functions. 
AJ would like to thank Vojkan Jak\v{s}i\'{c} for making him aware of the paper~\cite{Poulin} and Jan Derezi\'{n}ski for comments on hypergeometric functions. 
KI was supported by JSPS KAKENHI Grant Numbers JP25800073 and 17K05325.
The authors were partially supported by the Danish Council for Independent Research $|$ Natural Sciences, Grant DFF--4181-00042.

\end{document}